\newtheorem{thm}{Theorem}[section]
\newtheorem{lem}[thm]{Lemma}
\newtheorem{prop}[thm]{Proposition}
\theoremstyle{definition}
\newtheorem{defn}[thm]{Definition}
\theoremstyle{remark}
\newcommand{\La}{\mathcal{L}}
\newcommand{\Shift}[1]{\delta_{#1}}
\newcommand{\Mass}[1]{\mu_{#1}}
\newcommand{\rank}{r}
\newcommand{\modRandom}{\textbf{ModRandom}}
\newcommand{\LeftTree}{{\mathcal{T}_L}}
\newcommand{\RightTree}{{\mathcal{T}_R}}
\newcommand{\addToTree}{\textbf{AddToTree}}
\newcommand{\NtForbs}[1]{\overline{\Pond}_{\Forb}(#1)}
\newcommand{\StepByStep}{\text{\bf StepByStep}}
\newcommand{\R}{\mathcal{R}}
\newcommand{\Gram}{\mathcal{G}}
\newcommand{\NtSet}{\mathcal{N}}
\newcommand{\Nt}{N}
\newcommand{\T}{t}
\newcommand{\Voc}{\Sigma}
\newcommand{\Axiom}{\mathcal{S}}
\newcommand{\ProdRules}{\mathcal{P}}
\newcommand{\Lang}[1]{\mathcal{L}(#1)}
\newcommand{\LangImm}[1]{\mathcal{L}^{\lhd}(#1)}
\newcommand{\BigO}[1]{\mathcal{O}(#1)}
\newcommand{\Def}[1]{\textbf{#1}}
\newcommand{\UnG}[2]{#1\,|\,#2}
\newcommand{\PrG}[2]{#1\,.\,#2}
\newcommand{\Production}{\rightarrow}
\newcommand{\Derive}{\Rightarrow}
\newcommand{\DerPol}{\phi}
\newcommand{\DerPolLR}{\DerPol_{L}}
\newcommand{\AlgW}{\mathcal{A}}
\newcommand{\Remark}[1]{\vspace{.3em}\noindent{{\bf Remark: } #1}\vspace{.3em}}
\newcommand{\Prob}[1]{\mathbb{P}(#1)}
\newcommand{\Expect}[1]{\mathbb{E}(#1)}
\newcommand{\Forb}{\mathcal{F}}
\newcommand{\Pond}{\pi}
\newcommand{\Eqdef}{:=}
\newcommand{\GramW}{\mathcal{G}_\Pond}
\newcommand{\Draw}[1]{\text{draw}(#1)}
\newcommand{\OrderRel}{\preccurlyeq}
\begin{document}
\begin{frontmatter}
\title{Non-redundant random generation algorithms for weighted context-free grammars}%
\author[andy]{Andy Lorenz}
\author[yann]{Yann Ponty\corref{cor1}}
\ead{yann.ponty@lix.polytechnique.fr}

\address[andy]{Mathematics Departement\\
Denison University\\Granville, USA}%
\address[yann]{CNRS/Inria AMIB\\
Ecole Polytechnique\\ Palaiseau, France}%
\cortext[cor1]{Corresponding author}
\begin{abstract}
	We address the non-redundant random generation of $k$ words of length $n$ in a context-free language.
	Additionally, we want to avoid a predefined set of words. We study a rejection-based approach, whose worst-case time complexity is shown to
    grow exponentially with $k$ for some specifications and in the limit case of a coupon collector. We propose two
	algorithms respectively based on the recursive method and on an unranking approach.
    We show how careful implementations of these algorithms allow for a non-redundant generation of $k$ 	words of length $n$ in $\mathcal{O}(k\cdot n\cdot \log{n})$ arithmetic operations, after a precomputation of
	$\Theta(n)$ numbers. The overall complexity is therefore dominated by the generation of $k$ words, and the non-redundancy comes at a negligible cost.
	\end{abstract}

\begin{keyword}
Context-free languages; Random generation; Weighted grammars; Non-redundant generation; Unranking; Recursive random generation
\end{keyword}

\end{frontmatter}

\section{Introduction}
	The random generation of combinatorial objects has many direct applications in areas
	ranging from software testing~\cite{Den06} to bioinformatics~\cite{PoTeDe06}.
	It can help formulate conjectures on the average-case complexity of algorithms~\cite{Bassino2009},
	raises new fundamental mathematical questions, and motivates new developments on
    its underlying objects. These include, but are not limited to, generating
	functionology, arbitrary precision arithmetics and bijective combinatorics. Following the
	\emph{recursive} framework introduced by Wilf~\cite{wilf77},
	very elegant and general algorithms for the uniform random generation have been designed~\cite{flajoletcalculus} and implemented.
	Many optimizations of this approach have been developed, using specificities of certain classes of
	combinatorial structures~\cite{Gol95}, or floating-point arithmetics~\cite{DeZi99}. More recently, Boltzmann sampling~\cite{fullboltz}, 	
 an algebraic approach based on analytic combinatorics, has drawn much attention, mostly owing to its minimal
 memory consumption and its intrinsic theoretical elegance.

	For many applications, it is necessary to depart from \emph{uniform} models~\cite{DiLa03,Brlek2006}.
	A striking example lies in a recent paradigm for the \emph{in silico} analysis of the folding of Ribo-Nucleic Acids (RNAs).
	Instead of trying to predict a conformation of minimal free-energy, current approaches tend to focus on the \emph{ensemble
	properties} of realizable conformations, assuming a Boltzmann probability distribution~\cite{DiLa03} on the entire set of conformations.
	Random generation is then performed, and complex structural features are evaluated in a
	statistical manner. In order to capture such features, a general non-uniform scheme was introduced by Denise
	\emph{et al}~\cite{deniserandom}, based on the concept of \emph{weighted context-free grammars}.
    Recursive random generation algorithms were derived, with time and space complexities equivalent to that
    observed within the uniform distribution~\cite{flajoletcalculus}. This initial work was later completed toward
    general decomposable classes~\cite{Denise2010} and  a Boltzmann weighted sampling scheme, used as a preliminary
    step within a rejection-based algorithm for the multidimensional sampling of languages~\cite{Bodini2010}.

	In a weighted probability distribution, the probability ratio between the most and least frequent words
	typically grows exponentially on the size of the generated objects. Therefore a typical set of independently
    generated objects may feature a large number of copies of the heaviest (i.e. most probable) objects. This redundancy, which can be
    useful in some context, such as the estimation the probability of each sample from its frequency, is completely uninformative
    in the context of weighted random generation, as the exact probability of any sampled object can be derived in a
    straightforward manner. Consequently it is a natural question to address
	the \Def{non-redundant random generation} of combinatorial objects, i.e. the generation of a
	set of \Def{distinct} objects.

    The non-redundant random generation has, to the best of our knowledge, only  been addressed indirectly
    through the introduction of the {\tt PowerSet} construct by Zimmermann~\cite{ZimPowerset95}.
    An algorithm in $\Theta(n^2)$ arithmetic operations, or a
	practical $\Theta(n^4)$ complexity in this case, was derived for recursive decomposable structures.
	The absence of redundancy in the generated set of structures was achieved respectively through \emph{rejection}
	or an \emph{unranking} algorithms. Unfortunately, these approaches do not transpose well to the case of 
  weighted languages. Indeed, the former rejection algorithm may have exponential time-complexity in the average-case, as is shown later in the article. The unranking approach benefits from recent contributions by Martinez and Molinero~\cite{Martinez00ageneric}, who gave
  generic unranking procedures for labeled combinatorial classes, generalized by Weinberg and Nebel~\cite{WeinbergNebel} to rule-weighted 
  context-free grammars. However, the latter algorithm is restricted to integral weights, and requires a transformation of the grammar which may impact 
  its complexity. Furthermore, the question of figuring out a rank which avoids a set of words was completely ignored by these works.
    
	In this paper, we address the non-redundant generation of words from a context-free language.
    We remind or introduce in Section~\ref{sec:notations} some concepts and definitions related to weighted languages, and
    define our objective.
    In Section~\ref{sec:rejet}, we analyze the shortcomings of a naive rejection approach.
    We show that, although well-suited for the uniform
	distribution, the rejection approach may lead to prohibitive average-case complexities in the case of degenerate grammars, large sets
	of forbidden words, large weights values, or large sets of generated words.
    Then, in Section~\ref{sec:unifForb}, we introduce the concept of immature words, which allows us to rephrase the random generation process as a
    \emph{step-by-step} process. The resulting algorithm is based on the recursive method, coupled with a custom data structure to
    perform a generation of $k$ sequences of length $n$
	at the cost of $\mathcal{O}(k\cdot n\log(n))$ arithmetic operations after a precomputation in $\Theta(n)$
	arithmetic operations. We also propose in Section~\ref{sec:unrank} an unranking algorithm for weighted grammars which,
    coupled with a dedicated data structure that stores and helps avoid any forbidden word, also yields a $\mathcal{O}(k\cdot n\log(n))$ algorithm after $\Theta(n)$ arithmetic operations.
    We conclude in Section~\ref{sec:conclusion} with a summary of our propositions and results, and outline some perspectives and open questions.

\section{Notations and concepts}\label{sec:notations}
	\subsection{Context-free grammars}
	Let us remind, for the sake of completeness, some basic language-theoretic definitions.
	A \Def{context-free grammar} is a 4-tuple $\Gram=(\Voc,\NtSet,\ProdRules,\Axiom)$ where
	\begin{itemize}
		\item $\Voc$ is the alphabet, i.e. a finite set of terminal symbols.
		\item $\NtSet$ is a finite set of non-terminal symbols.
		\item $\ProdRules$ is the finite set of production rules, each of the form $\Nt\Production X$,
		for $\Nt\in\NtSet$ any non-terminal and $X\in \{\Voc\cup\NtSet\}^*$.
		\item $\Axiom$ is the \Def{axiom} of the grammar, i. e. the initial non-terminal.
	\end{itemize}
	A grammar $\Gram$ is then said to be in \Def{Binary Chomsky Normal Form} (BCNF) iff each of its non-terminals $\Nt\in\NtSet$ is productive and can only be derived using a limited number of production rule (two for \emph{union} type non-terminals, and one otherwise):
	\begin{itemize}
		\item Product type: $\Nt \Production \PrG{\Nt'}{\Nt''}$ with $\Nt',\Nt''\in\NtSet$;
		\item Union type: $\Nt \Production \UnG{\Nt'}{\Nt''}$ with $\Nt',\Nt''\in\NtSet$;
		\item Terminal type: $\Nt \Production \T$ with $\T\in\Voc$;
		\item Epsilon type: $\Nt \Production \varepsilon$, iff $\Nt$ cannot be derived from self-referential non-terminals.
	\end{itemize}
In the following, it will be assumed that the input grammar is given in BCNF. This restriction does not cause any loss of 
generality or performance, as it can be shown that any Chomsky Normal Form grammar can be transformed in linear time into an equivalent 
BCNF grammar, having equal number of rule up to a constant ratio.

	Let $\Lang{\Nt}$ be the \Def{language} associated to $\Nt\in\Voc$ within a grammar $\Gram$, i.e. the set of words composed of
    terminal symbols that can be generated starting from $\Nt$ through a sequence of derivations. One has
    \begin{equation} \Lang{\Nt} = \left\{  \begin{array}{cl}
      \Lang{\Nt'}\times\Lang{\Nt''} & \text{If }\Nt \Production \PrG{\Nt'}{\Nt''}\\
      \Lang{\Nt'}\cup\Lang{\Nt''} & \text{If }\Nt \Production \UnG{\Nt'}{\Nt''}\\
      \{\T\} & \text{If } \Nt \Production \T\\
      \{\varepsilon\} & \text{If } \Nt \Production \varepsilon\\
    \end{array}  \right. \label{eq:lang} \end{equation}
	The language $\Lang{\Gram}$ generated by a grammar $\Gram=(\Voc,\NtSet,\ProdRules,\Axiom)$
	is then defined as $\Lang{\Axiom}$ the language associated with the axiom $\Axiom$.
    Finally, let us denote by $\mathcal{L}_n$ the restriction of a language $\mathcal{L}$ to words of length $n$.

	\subsection{Weighted context-free grammars}
	\begin{defn}[Weighted Grammar~\cite{deniserandom}]
		A weighted grammar $\GramW$ is a 5-tuple $\GramW=(\Pond,\Voc,\NtSet,\ProdRules,\Axiom)$ where
		$(\Voc, \NtSet, \ProdRules,\Axiom)$ define a context-free grammar and
		$\Pond:\Voc\to\mathbb{R}^+$ is a weighting function that associates a real-valued weight $\Pond_\T$ to each terminal symbols $\T$.
	\end{defn}
	This notion of weight naturally extends to any mature word $w$ in a multiplicative fashion,
	i.e. such that $\Pond(w)=\prod_{i=1}^{|w|}\Pond_{w_i}$. It also extends additively on any set of words $\mathcal{L}$ through
    $\Pond(\mathcal{L})= \sum_{w\in \mathcal{L}} \Pond(w)$. One defines a
	$\Pond$-\Def{weighted probability distribution} over $\mathcal{L}$ such that
		\begin{equation} \label{eq:relativeProb} \Prob{w\;|\;\Pond, \mathcal{L}} = \frac{\displaystyle{\Pond(w)}}{\displaystyle{\sum_{w'\in\mathcal{L}}\Pond(w')}} = \frac{\displaystyle{\Pond(w)}}{\displaystyle{\Pond(\mathcal{L})}},\; \forall w\in\mathcal{L}.\end{equation}

	The random generation of words of a given length $n$ with respect to a weighted probability
	distribution has been addressed by previous works, and an algorithm in $\BigO{n\log{n}}$ after $\BigO{n^2}$ arithmetic operations was
	described~\cite{deniserandom} and implemented~\cite{PoTeDe06}.
	
  \subsection{Problem statement}

 \begin{algorithm}[t!]
\caption{Non-redundant sequential meta-algorithm for the generation of $k$ distinct words of length $n$, from a (weighted) context-free grammar $\GramW=(\Pond,\Voc,\NtSet,\ProdRules,\Axiom)$, avoiding a forbidden set of words $\Forb$.}
\label{alg:sequentialnonred}
{\bf NonRedundantSequential$(\GramW,k,n,\Forb)$}:
\begin{algorithmic}
\STATE Perform some precomputations\ldots
\STATE $\R \leftarrow \varnothing$
\WHILE {$|\R|\le k$}
	\STATE $x \leftarrow {\bf DrawNonRed}(\Axiom_n,\Pond(\Nt_n),\GramW,\Forb)$ \COMMENT{Any non-redundant algorithm}
    \STATE Update some data structure\ldots
    \STATE $(\R,\Forb) \leftarrow (\R\cup\{x\},\Forb\cup\{x\})$
\ENDWHILE
\RETURN $\R$
\end{algorithmic}
\end{algorithm}

  In the following, we consider algorithmic solutions for the non-redundant generation of a collection of words of a given length, generated by an unambiguous weighted context-free grammar. Our precise goal is to simulate efficiently a sequence of independent calls to a random generation algorithm until a set of exactly $k$ distinct words in a language $\mathcal{L}$ are obtained. 
The returned subset $\R\subseteq\mathcal{L}_n$, $|\R|=k$, can be generated in any order, and the random generation scenarios leading to an ordering $\sigma$ of $\R$ can be decomposed as:
$$ \sigma_1 \to  \sigma_1^* \to \sigma_2 \to (\sigma_1\;|\;\sigma_2)^*\to \ldots \to \sigma_{k-1} \to (\sigma_1\;|\;\cdots\;|\;\sigma_{k-1})^*\to \sigma_k.$$
The successive calls made to the weighted random generator are independent, therefore the total probability $p_\sigma$ of getting a set  $\R$ in a given order $\sigma$ is given by
\begin{align*} 
 p_\sigma&=\frac{\Pond(\sigma_{1})}{\Pond(\mathcal{L}_n)}\cdot \prod_{i=2}^k\left(\sum_{m\ge0}\left(\frac{\sum_{j=1}^{i-1}\Pond(\sigma_j)}{\Pond(\mathcal{L}_n)}\right)^m \cdot\frac{\Pond(\sigma_{i})}{\Pond(\mathcal{L}_n)}\right)\\
 &=\frac{\Pond(\sigma_{1})}{\Pond(\mathcal{L}_n)}\cdot \prod_{i=2}^k\left(\frac{1}{1-\frac{\sum_{j=1}^{i-1}\Pond(\sigma_j)}{\Pond(\mathcal{L}_n)}}
\cdot\frac{\Pond(\sigma_{i})}{\Pond(\mathcal{L}_n)}\right) = \prod_{i=1}^k\left(\frac{\Pond(\sigma_{i})}{\Pond(\mathcal{L}_n)-\sum_{j=1}^{i-1}\Pond(\sigma_j)}
\right).
 \end{align*}
Summing over every possible permutation of the elements in $\R$, one obtains 
   \begin{equation} \Prob{\R\;|\;k,  n} = 
     \displaystyle\sum_{\sigma \in \mathfrak{S}(\R)}\prod_{i=1}^{k}\frac{\Pond(\sigma_i)}{\Pond(\mathcal{L}_n)-\sum_{j=1}^{i-1}\Pond(\sigma_j)}  \label{eq:setDistribution}\end{equation}
where $\mathfrak{S}(\R)$ is the set of all permutations over the elements of $\R$.  The problem can then be restated as:\\[1em]
   \phantom{\hspace{.02\textwidth}}\fbox{\begin{minipage}{.94\textwidth}
   {\noindent\sc Weighted-Non-Redundant-Generation} ({\sc wnrg})\\[.6em]
   {\noindent\sc Input: }An unambiguous weighted grammar $\GramW$ and two positive integers $n$ and $k$.\\[.6em]
   {\noindent\sc Output: }A set of words $\R\subseteq\Lang{\Gram}_n$ of cardinality $k$  with probability $\Prob{\R\;|\;k, n}$.
   \end{minipage}}\\[1em]

   Note that the distribution described by Equation~\eqref{eq:setDistribution} naturally arises from a sequence of dependent calls
   $(r_1,\ldots,r_k)$ to weighted generators for $\mathcal{L}$, avoiding sets of words $\varnothing$, $\{r_1\}$, \ldots, $\{r_1,\ldots,r_{k-1}\}$ respectively, as implemented in Algorithm~\ref{alg:sequentialnonred}. It is therefore sufficient to address the generation of a single word $w$, while avoiding a prescribed set $\Forb$, in the weighted probability distribution $\Prob{w\;|\;\Pond, \mathcal{L}\backslash\Forb}$.

\section{Naive rejection algorithm}\label{sec:rejet}

	A \Def{naive rejection strategy} for this problem consists in drawing words at
	random in an unconstrained way, rejecting those from the forbidden set until a valid word
	is generated, as implemented in Algorithm~\ref{alg:reject}. As noted by Zimmermann~\cite{ZimPowerset95}, this approach is suitable for the uniform distribution
    of objects in general recursive specifications.
    This rejection strategy  relies on an auxiliary generator  {\bf draw($\cdots$)} of words
    from a (weighted) context-free languages, and we refer to
	previous works by Flajolet \emph{et al}~\cite{flajoletcalculus,fullboltz}, or
	Denise \emph{et al}~\cite{DeZi99} for efficient solutions for this problem. 

\begin{prop}[Correctness of a naive rejection algorithm]
  Any word returned by Algorithm~\ref{alg:reject} is drawn with respect to the weighted distribution on $\Lang{\Gram}_n\backslash \Forb$.
\end{prop}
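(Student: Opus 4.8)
The plan is to run a standard rejection-sampling argument. Since each call to the auxiliary generator \textbf{draw} is independent and produces a word distributed according to the weighted distribution~\eqref{eq:relativeProb} on the \emph{unconstrained} language $\Lang{\Gram}_n$, I would determine the law of the first accepted word by conditioning on acceptance. Writing $\Forb_n \Eqdef \Forb \cap \Lang{\Gram}_n$ for the forbidden words that can actually be produced, each individual draw lands in $\Forb_n$ --- and is therefore rejected --- with probability $q \Eqdef \Pond(\Forb_n)/\Pond(\Lang{\Gram}_n)$, independently of all previous draws.

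First I would fix a target word $w \in \Lang{\Gram}_n \setminus \Forb$ and express the probability that the algorithm returns $w$ as a sum over the number $m \geq 0$ of rejected draws preceding the accepting one. Mirroring the geometric sum already used for $p_\sigma$ in the problem statement, independence of successive calls gives
\begin{equation*}
\Prob{\text{return } w} = \sum_{m \geq 0} q^m \cdot \frac{\Pond(w)}{\Pond(\Lang{\Gram}_n)} = \frac{1}{1-q}\cdot\frac{\Pond(w)}{\Pond(\Lang{\Gram}_n)}.
\end{equation*}
Substituting $q$ and using $\Pond(\Lang{\Gram}_n) - \Pond(\Forb_n) = \Pond(\Lang{\Gram}_n \setminus \Forb)$, which follows from the additivity of $\Pond$ over the disjoint union $\Lang{\Gram}_n = (\Lang{\Gram}_n \setminus \Forb) \cup \Forb_n$, collapses the expression to $\Pond(w)/\Pond(\Lang{\Gram}_n \setminus \Forb)$. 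This is exactly the weighted distribution~\eqref{eq:relativeProb} instantiated on $\Lang{\Gram}_n \setminus \Forb$. Since the identity holds for every admissible $w$, and since by construction the algorithm never returns a word of $\Forb$, the claim follows.

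The one genuinely delicate point is well-definedness: the geometric series converges, and the loop halts almost surely, only when $q < 1$, i.e. when $\Pond(\Lang{\Gram}_n \setminus \Forb) > 0$. I would therefore record as a standing hypothesis that the admissible set $\Lang{\Gram}_n \setminus \Forb$ is non-empty (equivalently, that $\Forb$ does not absorb all of $\Lang{\Gram}_n$); under this assumption the number of iterations is geometrically distributed with finite expectation $1/(1-q)$, so the output is defined with probability one. The correctness of the underlying sampler itself --- that \textbf{draw} indeed realises~\eqref{eq:relativeProb} on $\Lang{\Gram}_n$ --- is taken as given from the cited recursive and Boltzmann generators, so the argument reduces to the elementary conditioning computation above.
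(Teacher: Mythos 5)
Your proof is correct and follows essentially the same route as the paper: both sum the geometric series over the number of rejected draws preceding the accepted one and collapse $\Pond(\Lang{\Gram}_n)-\Pond(\Forb)$ to $\Pond(\Lang{\Gram}_n\backslash\Forb)$. Your explicit remark that the argument requires $\Lang{\Gram}_n\backslash\Forb\neq\varnothing$ for the series to converge and the loop to halt almost surely is a small but genuine improvement over the paper's proof, which leaves this hypothesis implicit.
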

\begin{proof}
  Let $w$ be the word returned by the algorithm, and $\Forb = \{f_i\}_{i=1}^{|\Forb|}$. Let us characterize the sequences of words generated by {\bf draw}, leading to the generation of $w$, by mean of a rational expression over an alphabet $\Forb\cup \{w\}$:
  $$\mathcal{R}_{w} = (f_1\;|\;f_2\;|\;\cdots\;|\;f_{|\Forb|})^*.w.$$
  Let $p_x = {\Pond(x)}/{\Pond(\Lang{\Gram}_n)}$ the probability of emission of any -- possibly forbidden -- word $x\in \Lang{\Gram}_n$, then the cumulated probability of the sequences of calls to {\bf draw}, leading to the generation of $w$, is such that
  \begin{align*}
    \Prob{\mathbf{w}} &= p_w+\left(\sum_{i=1}^{|\Forb|}p_{f_i}\right)\cdot p_w + \left(\sum_{i=1}^{|\Forb|}p_{f_i}\right)\cdot \left(\sum_{i=1}^{|\Forb|}p_{f_i}\right)\cdot p_w + \cdots\\
    &= \frac{p_w}{1-\sum_{i=1}^{|\Forb|}p_{f_i}} = \frac{\Pond(w)}{\Pond(\Lang{\Gram}_n) -\sum_{i=1}^{|\Forb|}\Pond(f_i)} = \frac{\Pond(w)}{\Pond(\Lang{\Gram}_n\backslash \Forb)}.
  \end{align*}
\end{proof}
\begin{algorithm}[t]
\caption{Naive rejection algorithm for generating a word of length $n$, from a (weighted) context-free grammar $\GramW$, avoiding a forbidden set of words $\Forb$.}
\label{alg:reject}
{\bf NaiveRejection$(\GramW,n,\Forb)$}:
\begin{algorithmic}
\REPEAT
	\STATE $t \leftarrow \text{\bf draw}(\GramW,n)$ \hfill \COMMENT{One may use any available generation algorithm.}
\UNTIL{$t \notin \Forb$}
\RETURN $t$
\end{algorithmic}
\end{algorithm}
\subsection{Complexity analysis: Uniform distribution}
  Let us analyze the complexity of Algorithm~\ref{alg:reject}, given $\mathcal{L}$ a context-free language, $n\in\mathbb{N}^+$ a positive integer and $\Forb\subset\mathcal{L}_n$ a set of forbidden words, assuming a uniform distribution on $\mathcal{L}_n$. 

 One first remarks that the worst-case time-complexity of the algorithm is unbounded, as nothing prevents the algorithm from repeatedly generating the same word. An average-case analysis, however, draws a more contrasted picture of the time complexity.
	\begin{thm}
		In the uniform distribution, the naive rejection implemented in Algorithm~\ref{alg:reject} leads to an average-case complexity in
		 $\mathcal{O}\left(\left(\frac{|\mathcal{L}_n|}{|\mathcal{L}_n|-|\Forb|}\right)\cdot k\log{k}\cdot  \Draw{n}\right)$, where $\Draw{n}$ is the complexity of drawing a single word.
	\end{thm}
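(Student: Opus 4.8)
The plan is to reduce the claimed running time to a bound on the expected total number of elementary calls to the underlying generator \textbf{draw}. Collecting $k$ distinct words is carried out by the sequential meta-algorithm (Algorithm~\ref{alg:sequentialnonred}), which invokes the rejection routine (Algorithm~\ref{alg:reject}) exactly $k$ times, each invocation rejecting against the current forbidden set, namely the initial set $\Forb$ augmented by the words already collected. Since every single trial performs one call to \textbf{draw} at cost $\Draw{n}$ (membership testing against $\Forb$ being no more expensive, e.g. through hashing), the total expected cost is $\Draw{n}$ times the expected number of trials, and it suffices to estimate the latter. This is where the coupon-collector flavour of the analysis will surface.

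First I would analyze a single invocation. Suppose $j$ new words have already been gathered, so that the active forbidden set has cardinality $|\Forb|+j$. Because \textbf{draw} returns a word uniformly distributed over $\mathcal{L}_n$, a single trial is accepted with probability $\frac{|\mathcal{L}_n|-|\Forb|-j}{|\mathcal{L}_n|}$, independently across trials; hence the number of trials needed to obtain the $(j+1)$-th new word is geometric, with expectation $\frac{|\mathcal{L}_n|}{|\mathcal{L}_n|-|\Forb|-j}$. Summing over $j=0,\dots,k-1$ by linearity of expectation, and writing $N=|\mathcal{L}_n|$ and $m=N-|\Forb|$, the expected number of trials is
\[ T \;=\; \sum_{j=0}^{k-1}\frac{N}{m-j} \;=\; N\,\bigl(H_{m}-H_{m-k}\bigr), \]
where $H_i$ denotes the $i$-th harmonic number. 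The target estimate is then equivalent to $m\,(H_m-H_{m-k})=\BigO{k\log k}$.

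The main obstacle is precisely this harmonic estimate, since bounding each of the $k$ summands by the largest one $\frac{1}{m-k+1}$ only yields $\BigO{k^2}$ in the regime $k\approx m$, which is too weak to recover the $k\log k$ factor. I would instead exploit the logarithmic growth of the harmonic numbers through the integral comparison $H_m-H_{m-k}\le \ln\frac{m}{m-k}$ and split into two cases. For $k\le m/2$, the elementary inequality $-\ln(1-x)\le 2x$ on $[0,\tfrac12]$ gives $m\ln\frac{m}{m-k}\le 2k$; for $k>m/2$, one has $m<2k$ and may bound the difference crudely by $H_m\le 1+\ln m$, so that $m\,(H_m-H_{m-k})\le 2k\,(1+\ln 2k)$. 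In both cases $m\,(H_m-H_{m-k})=\BigO{k\log k}$, whence $T=\frac{N}{m}\,\BigO{k\log k}$. Multiplying by the per-trial cost $\Draw{n}$ yields the announced average-case complexity $\BigO{\bigl(\tfrac{|\mathcal{L}_n|}{|\mathcal{L}_n|-|\Forb|}\bigr)\,k\log k\cdot \Draw{n}}$.
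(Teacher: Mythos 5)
Your proof is correct and follows essentially the same route as the paper's: both count the expected number of trials as a sum of geometric expectations, reduce it to a difference of harmonic numbers, and extract the factor $|\mathcal{L}_n|/(|\mathcal{L}_n|-|\Forb|)$ for the initial forbidden set. The only real difference is in justifying the key estimate: the paper bounds the harmonic difference termwise by the coupon-collector worst case $k\cdot\mathcal{H}_k$, whereas you use an integral comparison with a two-case split on $k$ versus $m/2$ --- both yield the required $\mathcal{O}(k\log k)$, and your version is if anything spelled out more carefully.
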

	\begin{proof}
		In the uniform model when $\Forb=\varnothing$, the number of attempts required by the generation of the
		$i$-th word only depends on $i$ and is independent from prior events.
        Thus the expected number $X_{n,k}$ of attempts for $k$ distinct words of size $n$ is given by
		$$ \Expect{X_{n,k}} = \sum_{i=0}^{k-1}\frac{l_n}{l_n-i} = l_n(\mathcal{H}_{l_n}-\mathcal{H}_{l_n-k}) $$
		where $l_{n}\Eqdef|\mathcal{L}_n|$ is the number of words of size $n$ in the language and $\mathcal{H}_i$ the harmonic number of order $i$, as
		pointed out by Flajolet \emph{et al}~\cite{FlaGarThi92}. It follows that $\Expect{X_{n,k}}$ is trivially
		increasing with $k$, while remaining upper bounded by $k\cdot\mathcal{H}_{k}\in\Theta(k\log(k))$ when $k=l_n$
		(Coupon collector problem). Since the expected number of rejections due to a non-empty forbidden set $\Forb$ remains the same
		throughout the generation, and does not have any influence over the generated sequences, it can be
		considered independently and contributes to a factor $\frac{|\mathcal{L}_n|}{|\mathcal{L}_n|-|\Forb|}$.
	\end{proof}
	It follows that, unless the forbidden set dominates the set of words, the \emph{per-sample} complexity of the naive rejection strategy remains largely unaffected (at most a factor $\BigO{\log k}$, i.e. $\Omega(n)$ since $k\in\Omega(|\Voc|^n)$) by the cumulated cost of rejections. 

		\subsection{Complexity analysis: Weighted languages}
  Turning towards {\bf weighted context-free languages}, one shows that a rejection strategy
 may have average-case complexity which is {\bf exponential on $k$}, even in the most favorable case of an empty initial set of forbidden words.

    \begin{prop}
      The generation of $k$ distinct words, starting from an empty initial forbidden set $\Forb=\varnothing$, may require a number of calls to {\bf draw} that is exponential on $k$.
    \end{prop}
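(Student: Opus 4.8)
The plan is to prove the proposition by exhibiting a single, explicit family of weighted grammars — a \emph{degenerate} one in which almost all of the weight concentrates on a handful of words — and by recasting the rejection process as a \emph{weighted coupon collector}. First I would observe that in Algorithm~\ref{alg:reject} every call to \textbf{draw} is an independent sample from the weighted distribution on $\mathcal{L}_n$: the forbidden set $\Forb$ only governs whether a sample is \emph{accepted}, never the law of the underlying draw. Hence, running Algorithm~\ref{alg:sequentialnonred} with naive rejection until $k$ distinct words are collected produces exactly a sequence of i.i.d. draws from $\Prob{\cdot \mid \Pond,\mathcal{L}_n}$, stopped when $k$ distinct outcomes have appeared. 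The number of calls to \textbf{draw} is therefore the hitting time of the classical coupon-collector problem with non-uniform coupon probabilities $p_w = \Pond(w)/\Pond(\mathcal{L}_n)$.

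For the explicit witness I would take the grammar generating $a^\star b^\star$, e.g. $S \to A\,.\,B$, $A \to a\,.\,A \mid \varepsilon$, $B \to b\,.\,B \mid \varepsilon$ (unambiguous, and easily rewritten in BCNF), with weights $\Pond_a = \beta > 1$ and $\Pond_b = 1$. At length $n$ the language is $\mathcal{L}_n = \{\, a^i b^{\,n-i} \mid 0 \le i \le n \,\}$, so it contains exactly $k := n+1$ words, with $\Pond(a^i b^{\,n-i}) = \beta^{\,i}$ and
\[ \Pond(\mathcal{L}_n) \;=\; \sum_{i=0}^{n}\beta^{\,i} \;=\; \frac{\beta^{\,n+1}-1}{\beta-1}. \]
The lightest word is $b^n$, drawn with probability $p_{\min} = (\beta-1)/(\beta^{\,n+1}-1)$, i.e. $1/p_{\min} = \Theta(\beta^{\,n})$.

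I would then fix $k = n+1$, so that generating $k$ distinct words forces the collection of \emph{all} words of $\mathcal{L}_n$, and in particular of $b^n$. Since one cannot have collected every word before $b^n$ has been drawn at least once, the total number $T$ of calls to \textbf{draw} dominates the first hitting time of $b^n$, which is geometric with mean $1/p_{\min}$; monotonicity of expectation then gives $\Expect{T} \ge 1/p_{\min} = \Theta(\beta^{\,n}) = \Theta(\beta^{\,k-1})$. Taking $\beta = 2$ already yields $\Expect{T} = \Theta(2^{k})$, exponential in $k$, which establishes the claim even though $\Forb = \varnothing$ initially.

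The routine parts are the geometric-series evaluation of $\Pond(\mathcal{L}_n)$ and the mean of a geometric hitting time. The step I expect to require the most care is the reduction itself: arguing rigorously that the successive draws remain i.i.d. with a law unaffected by the growing set $\Forb$, and that the coupon-collector stopping time is legitimately lower-bounded by the hitting time of the single rarest word (a monotonicity argument on stopping times). I would also make sure the witness grammar genuinely restricts to the intended $n+1$ words at length $n$ and is unambiguous, so that the weighted distribution of Equation~\eqref{eq:relativeProb} is well defined.
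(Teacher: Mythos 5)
Your proof is correct, and it uses the same witness language ($a^\star b^\star$ with geometrically growing weights) and the same basic tool (lower-bounding the collection time by the waiting time of a rare event) as the paper. The genuine difference is in how you instantiate the rare event. You tie $k$ to the size of the language by setting $k=n+1=|\mathcal{L}_n|$ and wait for the single lightest word, so your lower bound $\Expect{T}\ge 1/p_{\min}=\Theta(\beta^{n})$ is exponential in $k$ only because $k$ and $n$ coincide; your construction lives entirely in the coupon-collector limit. The paper instead keeps $k$ and $n$ decoupled: it bounds the waiting time of the \emph{set} $\mathcal{V}_{n,k}$ of all words with few heavy letters, whose total probability is below $\alpha^{-k}$, and observes that any collection of $k$ distinct words must eventually hit that set. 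This yields $\Omega(\alpha^{k})$ expected calls for every $k\le n$, i.e.\ the exponential blow-up in the \emph{number of requested samples} appears long before one asks for the whole language --- which is the phenomenon the surrounding discussion (and the abstract's distinction between ``some specifications'' and ``the limit case of a coupon collector'') is really after. Your argument does establish the literal statement, since ``may require'' only calls for one family of instances, and it has the virtue that in your family even the uniform coupon collector would need only $\Theta(n\log n)$ draws, so the $\Theta(2^{k})$ cost is genuinely attributable to the weights; but it proves a strictly weaker fact than the paper's proof, and if you want the result to carry its intended meaning you should redo the last step with $k$ free, e.g.\ by bounding the probability of the $n-k+1$ lightest words collectively rather than the single lightest one. (One minor shared blemish: both your grammar and the paper's use an $\varepsilon$-production on a self-referential non-terminal, which technically violates the BCNF condition stated in Section~2; this is cosmetic and fixable by generating only nonempty blocks.)
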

    \begin{proof}
	Consider the following grammar, generating the language denoted by the regular expression $a^*b^*$:
	\begin{align*}
		S & \Production  \PrG{a}{S}\;|\; T &
		T & \Production  \PrG{b}{T}\;|\; \varepsilon
	\end{align*}
	We adjoin a weight function $\Pond$ to this grammar, such that $\Pond(b)\Eqdef\alpha>1$ and $\Pond(a)\Eqdef 1$.
	The probability of any word $\omega_m\Eqdef a^{n-m}b^m$ in the language is
	$$\Prob{\omega_m}
		=\frac{\Pond(\omega_m)}{{\sum_{\substack{\omega\in\Lang{S}\\|\omega|=n}} \Pond(\omega)}}
		=\frac{\alpha^m}{\sum_{i=0}^{n}\alpha^i}
		=\frac{\alpha^{m+1}-\alpha^{m}}{\alpha^{n+1}-1}
		< \alpha^{m-n}.$$
	Now consider the set $\mathcal{V}_{n,k}\subset\mathcal{S}_n$ of words having less
	than $n-k$ occurrences of the symbol $b$. The probability of generating a word from
	$\mathcal{V}_{n,k}$ is then
	$$
		\Prob{\mathcal{V}_{n,k}}
		=\sum_{i= 0}^{n-k}\Prob{\omega_{n-k-i}}
		=\frac{\alpha^{n-k+1}-1}{\alpha^{n+1}-1}< \alpha^{-k}
	$$
	The expected number of generations before generating any element of $\mathcal{V}_{n,k}$
	is greater than $\alpha^k$. Since any non-redundant set of $k$ sequences issued from $\mathcal{S}_n$
	must contain at least one sequence from $\mathcal{V}_{n,k}$, then the average-case time complexity
	of a naive rejection approach is in $\Omega(n\cdot\alpha^k)$, i.e. exponential on $k$ the number of words.
    \end{proof}

	However, the above example is based on a regular language, and may not be typical of the rejection algorithm's
	behavior on general context-free languages. Indeed, it can be shown that, under a natural assumption,
	no single word can asymptotically contribute a significant portion of the distribution in simple type grammars.
	\begin{prop}
		Let $\GramW=(\Pond,\Voc,\NtSet,\Axiom,\ProdRules)$ be a weighted grammar of simple type\footnote{A grammar of
		simple type is mainly a grammar whose dependency graph is strongly-connected and whose number of words follow
		an aperiodic progression (See~\cite{FlaFusPiv07} for a more complete definition). Such a grammar can easily be
		found for the avatars of the algebraic class of combinatorial structures (Dyck words, Motzkin paths, trees of
		fixed degree,...), all of which can be interpreted as trees.}. 
   Assume that $\omega^\triangle_n$ the most probable (i.e. largest weight w.r.t. $\Pond$) word of length $n$
   has weight $\Pond(\omega^\triangle_n) \in\Theta(\alpha^n)$, for some $\alpha>0$.\\
		Then the probability of $\omega^\triangle$ decreases exponentially as $n\to\infty$:
		$$\exists\, \beta<1\text{ such that } \Prob{\omega^\triangle\;|\;\Pond}=\frac{\Pond(\omega^\triangle)}{\Pond(\Lang{\GramW}_n)} \in \Omega(\beta^n).$$
	\end{prop}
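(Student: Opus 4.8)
The plan is to treat the numerator and denominator of the ratio $\Prob{\omega^\triangle\;|\;\Pond}=\Pond(\omega^\triangle_n)/\Pond(\Lang{\GramW}_n)$ separately through singularity analysis, and then to exploit the normalization $\Prob{\omega^\triangle\;|\;\Pond}\le 1$ to turn a weak inequality between exponential growth rates into a strict one.

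First I would introduce the weighted generating function $F(z)=\sum_{n\ge0}\Pond(\Lang{\GramW}_n)\,z^n$, whose coefficients are exactly the denominators. Translating the language equations~\eqref{eq:lang} into a system of algebraic equations for $F$ and the series attached to each non-terminal in $\NtSet$, the simple-type hypothesis guarantees that this system is positive, strongly connected and aperiodic (the weights being positive reals preserve positivity). The Drmota--Lalley--Woods theorem then yields a unique dominant singularity $\rho\in(0,\infty)$ of square-root type, and the transfer theorem of singularity analysis~\cite{FlaFusPiv07} gives $$\Pond(\Lang{\GramW}_n)=[z^n]F(z)\sim C\cdot\rho^{-n}\cdot n^{-3/2}$$ for some constant $C>0$.

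Combining this estimate with the hypothesis $\Pond(\omega^\triangle_n)\in\Theta(\alpha^n)$ produces $$\Prob{\omega^\triangle\;|\;\Pond}\in\Theta\big((\alpha\rho)^n\,n^{3/2}\big).$$ The trivial bounds $0<\Pond(\omega^\triangle_n)\le\Pond(\Lang{\GramW}_n)$ already force $\alpha\rho\le1$; the whole point is to establish strictness. This is where the subexponential factor $n^{-3/2}$, characteristic of the square-root singularity of tree-like (simple type) grammars, is decisive: were $\alpha\rho=1$, the asymptotic above would read $\Prob{\omega^\triangle\;|\;\Pond}\in\Theta(n^{3/2})$, which diverges and contradicts $\Prob{\omega^\triangle\;|\;\Pond}\le1$. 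Hence $\alpha\rho<1$, and setting $\beta\Eqdef\alpha\rho$ gives $\beta<1$ together with $\Prob{\omega^\triangle\;|\;\Pond}\in\Theta(\beta^n n^{3/2})\subseteq\Omega(\beta^n)$, the announced bound; the same asymptotic shows $\Prob{\omega^\triangle\;|\;\Pond}=o(\gamma^n)$ for every $\gamma\in(\beta,1)$, i.e. genuine exponential decay.

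I expect the main obstacle to be exactly this upgrade from $\le$ to $<$: comparing exponential orders alone only delivers $\alpha\rho\le1$, and it is precisely the $n^{-3/2}$ correction supplied by the simple-type assumption, combined with the normalization $\Prob{\omega^\triangle\;|\;\Pond}\le1$, that rules out the boundary case. I would therefore devote the most care to verifying that the Drmota--Lalley--Woods hypotheses genuinely hold for the \emph{weighted} system (positivity, strong connectivity and aperiodicity, as in the footnote's definition), since it is this verification that licenses the sharp $n^{-3/2}$ estimate rather than a mere statement about exponential growth rates.
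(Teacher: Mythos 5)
Your proof is correct and follows essentially the same route as the paper: apply Drmota--Lalley--Woods to the weighted system to obtain $\Pond(\Lang{\GramW}_n)\sim \kappa'\,\alpha'^{\,n} n^{-3/2}$, compare exponential rates with $\Pond(\omega^\triangle_n)\in\Theta(\alpha^n)$, and conclude with $\beta=\alpha/\alpha'$ (your $\alpha\rho$). You are in fact more careful than the paper on the one delicate point: the paper deduces $\alpha<\alpha'$ directly from $\Pond(\omega^\triangle_n)\le\Pond(\Lang{\GramW}_n)$, whereas you correctly identify the subexponential factor $n^{-3/2}$ (together with $\Prob{\omega^\triangle\;|\;\Pond}\le 1$) as the reason the inequality is strict, and you avoid the paper's slip of writing $\beta:=\alpha'/\alpha$ where $\alpha/\alpha'$ is meant.
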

	\begin{proof}
		The Drmota-Lalley-Woods theorem~\cite{Drmota97,Lalley93,Woods93} establishes that the generating
		function of any simple type grammar has a \emph{square-root type} singularity.
		This powerful result relies on properties of the underlying system of functional equations, and therefore also holds for the coefficients of weighted generating functions~\cite{Denise2010}.
   Therefore the overall weights $W_n\Eqdef \Pond(\Lang{\GramW}_n)$ -- the coefficients of the weighted generating function -- 
   follow an expansion of the form $\frac{\kappa'\cdot \alpha'^{n}}{n\sqrt{n}}(1+\mathcal{O}(1/n))$, $\alpha',\kappa'>0$. Since $\omega^\triangle_n$ is contributing to $W_n$, then
		one has $\Pond(\omega^\triangle_n)\le\Pond(\Lang{\GramW}_n)$ and therefore $\alpha<\alpha'$. The proposition follows directly from taking $\beta:=\alpha'/\alpha.$
	\end{proof}

	Furthermore, one can easily design disconnected grammars such that, for any fixed length $n$, a subset of words $\mathcal{M}\subset \Lang{\GramW}_n$
	having maximal number of occurrences of a given symbol $\T$  has total cumulated probability $1-\alpha^n$, $0<\alpha<1$, in the weighted distribution. 
  It follows that sampling more than $|\mathcal{M}|$ words (e.g. a polynomial number of such words) can	be extremely time-consuming (typically requiring exponential-time in $n$).

    Finally, it is worth noticing that, in non-degenerate context-free languages, the weight of the least probable word $\omega^\nabla_n$ grows like
    $\Theta(\alpha^n)$, $\alpha<1$, where the exact value of $\alpha$ depends on a subtle trade-off between structural properties of the language and its weight function $\Pond$.
    In particular, $\alpha$ can become arbitrarily close to $0$, by adequately increasing the weight $\Pond(\T)$ of some terminal symbols.
    Sampling $k=|\Lang{\GramW}_n|$ words (Coupon Collector) then requires an expected $\Omega(\alpha^{-n})$ number of calls to {\bf draw}, since the waiting time of the least probable word is clearly a lower-bound for the full collection. 
    Since the number of words in a context-free language is bounded by $|\Voc|^n$ and does not depend on the weight, 
    then the \emph{average cost per generation} may grow exponentially on $n$. 
    This observation generalizes to many weighted languages, as shown by in Boisberranger~\emph{et al}~\cite{Boisberranger2012}.

\section{A step-by-step recursive algorithm}\label{sec:unifForb}
		\begin{figure}[t]
					 \resizebox{\textwidth}{!}{{\begin{tikzpicture}[inner sep=0]
  \node (s1) at (0,0) {\includegraphics[height=.15\textheight]{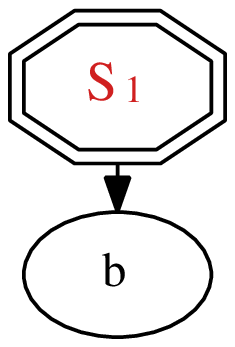}};
  \node[right=2.5em of s1.north east,anchor=north west] (s3)   {\includegraphics[height=.22\textheight]{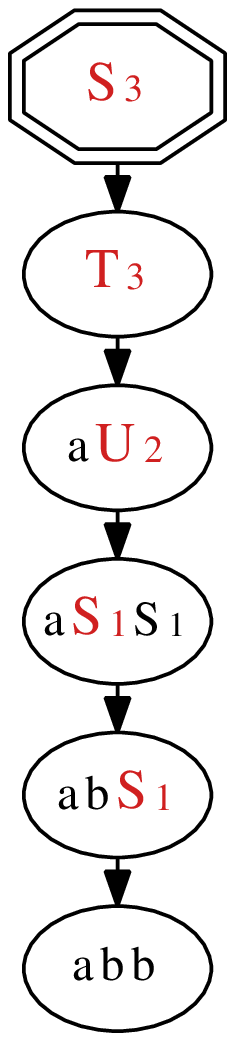}};
  \node[right=2.5em of s3.north east,anchor=north west] (s5)  {\includegraphics[height=.25\textheight]{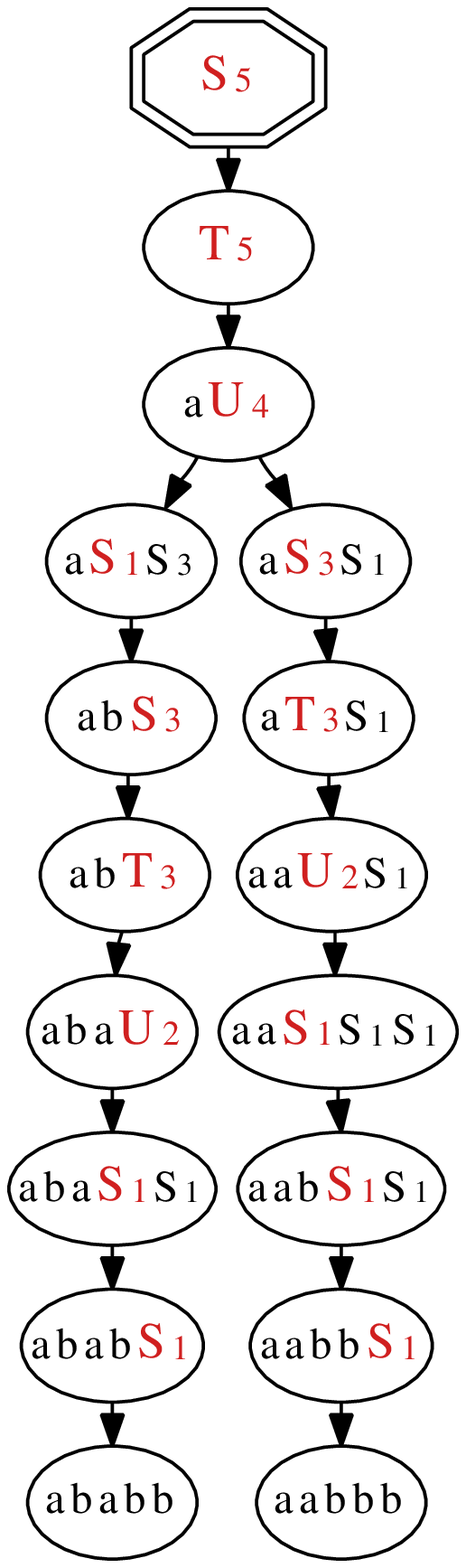}};
  \node[right=0em of s5.north east,anchor=north west] (s7)  {\includegraphics[height=.42\textheight]{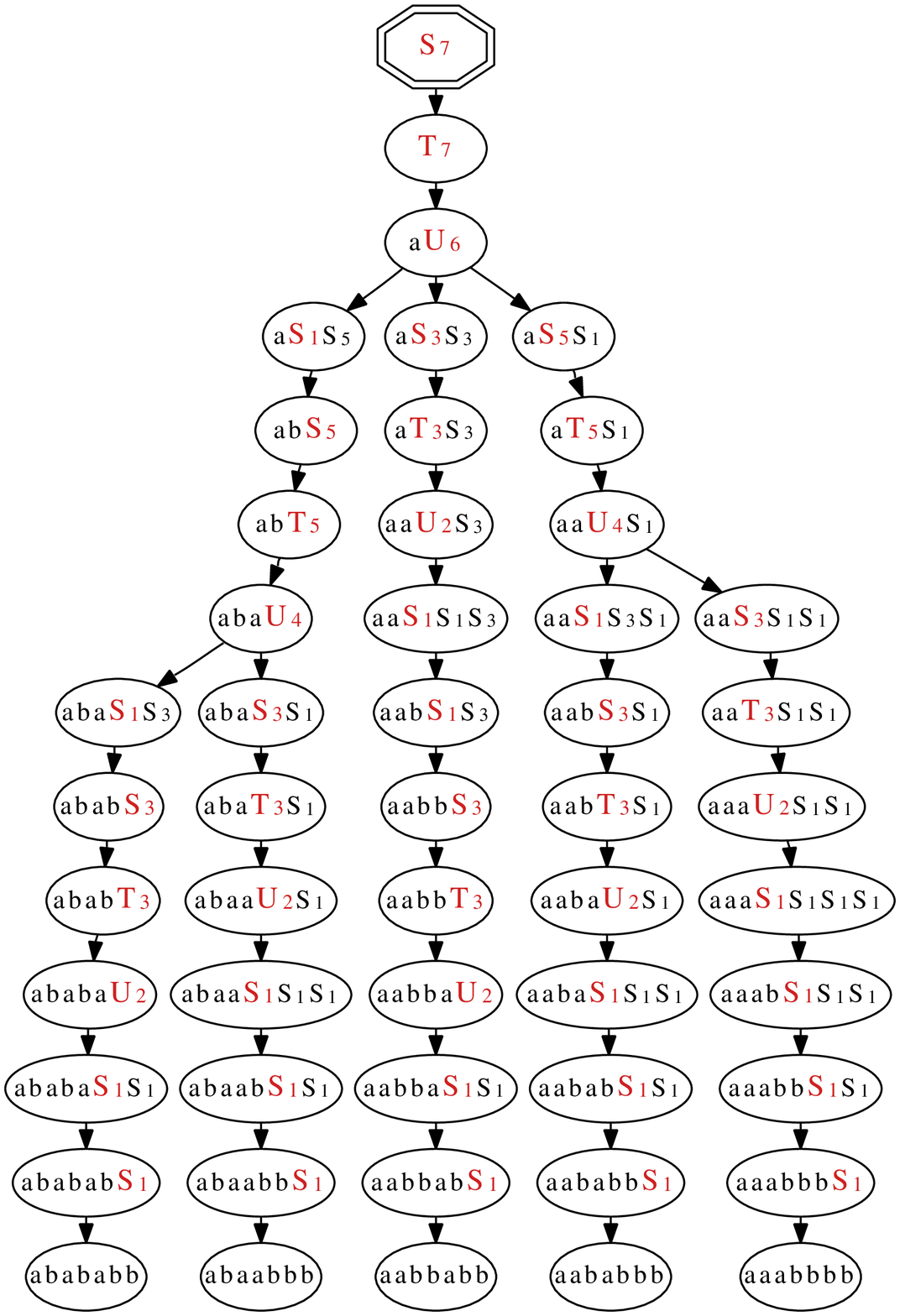}};
  \node[left=0em of s7.south west,anchor=south east] (s9) {\includegraphics[height=.20\textheight]{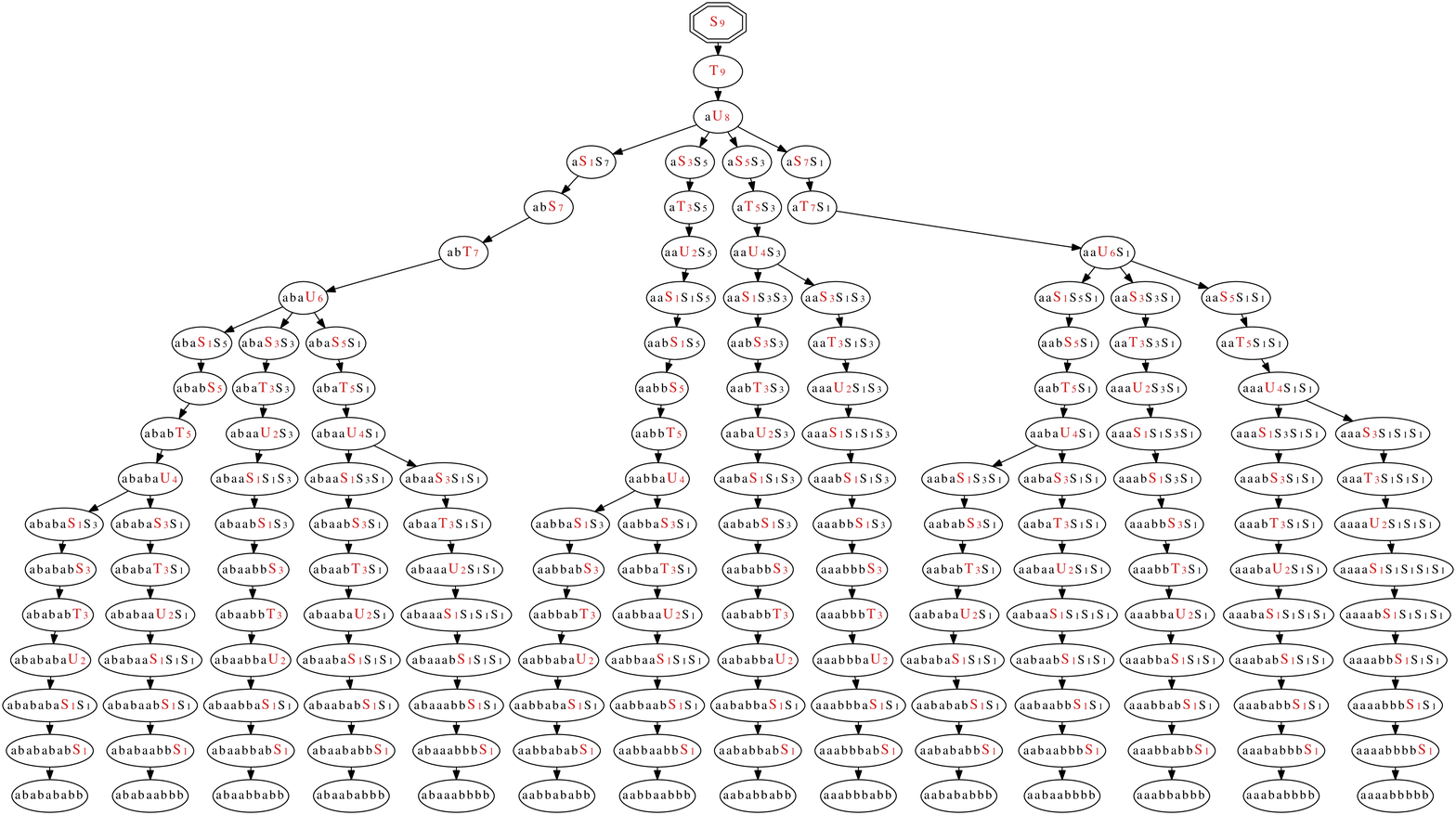}};
  \node[above=3pt of s1] {$n=1$};
  \node[above=3pt of s3] {$n=3$};
  \node[above=3pt of s5] {$n=5$};
  \node[above=3pt of s7] {$n=7$};
  \node[] at ($(s9.north)+(-2,-.5)$) {$n=9$};
\end{tikzpicture}}}

				\caption{Trees of all walks associated with prefix notations of binary trees, having length $n\in[1,9]$ and generated by the BCNF grammar $\{S\Production \UnG{T}{b},\; T\Production \PrG{a}{U},\; U\Production \PrG{S}{S}\}$,
				 under the \emph{leftmost first} derivation policy $\DerPolLR$.}
				\label{fig:MotzkinWalks}
	\end{figure}	A common approach to random generators for combinatorial objects~\cite{flajoletcalculus,deniserandom}
    consists of treating non-terminal symbols as \Def{independent generators}.
    For instance,  generating from an union-type non-terminal $\Nt\to \Nt'.\Nt''$, involves two independent calls to
    dedicated generators for $\Nt'$ and $\Nt''$, either directly (Boltzmann sampling), or after figuring
    out suitable lengths for $\Nt'$ and $\Nt''$ (Recursive method).
    Unfortunately, avoiding a predefined set of words breaks the independence assumption.

    For instance, consider an unweighted grammar $\Gram$,
    having axiom $\Nt$, and rules:
    \begin{align*}
      \Nt& \to \Nt'.\Nt'',& \Nt'&\to a\;|\;b,&&\text{and}& \Nt''&\to a\;|\;b.
    \end{align*}
    Remark that,  starting from either $N'$ or $N''$, both the recursive method and Boltzmann sampling would chose one of the rules with probability $1/2$.
    Assume now that some set $\Forb=\{aa\}$ has to
    be avoided, and that a sequential choice of derivations is adopted such that $\Nt'$ is fully derived before taking $\Nt''$ into consideration.
    In this case, the derivation $\Nt'' \to a$ must be forbidden iff $\Nt' \to a$ was chosen.
    Moreover, the probabilities assigned to the derivations of $\Nt'$ must reflect the future unavailability of some choices for $\Nt''$. 
    One possibility is to use altered probabilities such that $\{\Nt'\to_{1/3} a,\Nt'\to_{2/3} b\}$, and introduce conditional probabilities such that
     $\{\Nt''\to_{0} a, \Nt''\to_{1} b\}$ when $\Nt' \to a$, and $\{\Nt''\to_{1/2} a, \Nt''\to_{1/2} b\}$ when $\Nt' \to b$.

    The idea behind our step-by-step algorithm is to capture this dependency sequentially, by considering random generation scenarios
	as random (parse) walks. This perspective allows to determine the total contribution of all forbidden (i.e. previously encountered)
    words for each of the locally-accessible alternatives. These contributions can then be used to modify conditionally the precomputed probabilities, 
    leading to an uniform (resp. weighted) generation within $\Lang{\Gram}_n/\Forb$, while keeping the computational cost to a reasonable level.	

    \subsection{Immature words: A compact  description of fixed-length sublanguages}
		\begin{figure*}[t]
				{\centering                 \resizebox{\textwidth}{!}{
\begin{tikzpicture}[draw=black, inner sep=0]
  \node (s1) at (10pt,0) {\includegraphics[width=450pt]{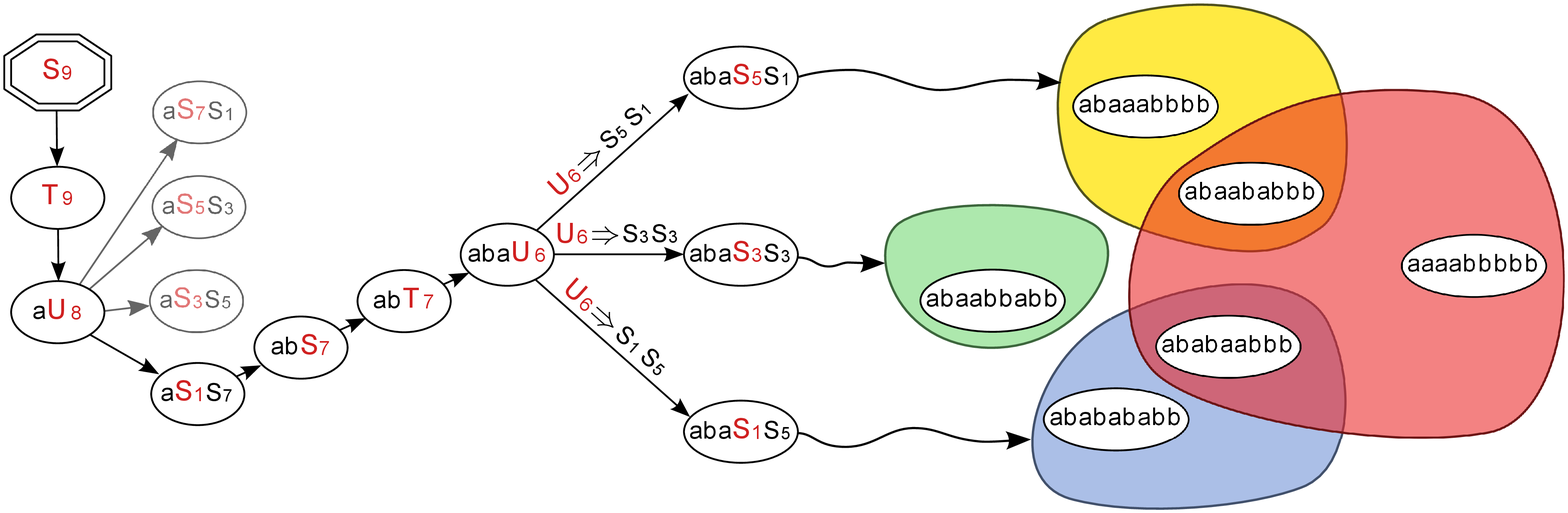}}; 
  \node at (175pt,-1pt) {\relsize{+4}$\mathcal{F}$}; 
  \node at (131pt,60pt) {$\pi($\sf abaS$_5$S$_1)$}; 
  \node at (120pt,-63pt) {$\pi($\sf abaS$_1$S$_5)$}; 
  \node at (70pt,5pt) {$\pi($\sf abaS$_3$S$_3)$}; 
  \node[] (f1)  at (160pt,30pt) {}; 
  \node[draw=orange!90!black,thick,fill=orange!30,rounded corners=2pt,inner sep=2pt] (l1) at (205pt,60pt) {$\overline{\pi}_\mathcal{F}($\sf abaS$_5$S$_1)$}; 
  \node (f2)  at (160pt,-35pt) {}; 
  \node[draw=purple!90,thick,fill=purple!20,rounded corners=2pt,inner sep=2pt] (l2) at (205pt,-65pt) {$\overline{\pi}_\mathcal{F}($\sf abaS$_1$S$_5)$}; 
  \path (l1) edge[-latex] (f1.center);
  \path (l2) edge[-latex] (f2.center);  
\node at (-3pt,68pt) { $p_1$};
\node at (-3pt,17pt) { $p_2$};
\node at (-3pt,-34pt) { $p_3$};
\node[text width=450pt,anchor=north] at (10pt,-80pt) 
  {  \begin{align*}
   p_1 &=\frac{\pi(\text{\sf abaS$_5$S$_1$})-\overline{\pi}(\text{\sf abaS$_5$S$_1$})}{\pi(\text{\sf abaU$_6$})-\overline{\pi}(\text{\sf abaU$_6$})}
  & p_2&=\frac{\pi(\text{\sf abaS$_3$S$_3$})-\overline{\pi}(\text{\sf abaS$_3$S$_3$})}{\pi(\text{\sf abaU$_6$})-\overline{\pi}(\text{\sf abaU$_6$})}
  & p_3&=\frac{\pi(\text{\sf abaS$_1$S$_5$})-\overline{\pi}(\text{\sf abaS$_1$S$_5$})}{\pi(\text{\sf abaU$_6$})-\overline{\pi}(\text{\sf abaU$_6$})}\\
    &= \frac{\pi(\text{\sf abaaabbbb})}{\pi(\text{\sf abaU$_6$})-\overline{\pi}(\text{\sf abaU$_6$})} 
  &&=\frac{\pi(\text{\sf abaabbabb})}{\pi(\text{\sf abaU$_6$})-\overline{\pi}(\text{\sf abaU$_6$})}
  &&=\frac{\pi(\text{\sf ababababb})}{\pi(\text{\sf abaU$_6$})-\overline{\pi}(\text{\sf abaU$_6$})} \\
    &\propto \pi(\mathcal{L}(\text{\sf abaS$_5$S$_1$})-\mathcal{F}) 
  &&\propto \pi(\mathcal{L}(\text{\sf abaS$_3$S$_3$})-\mathcal{F}) 
  &&\propto \pi(\mathcal{L}(\text{\sf abaS$_1$S$_5$})-\mathcal{F}) 
  \end{align*} 
};
\end{tikzpicture}  }}
				\caption{Snapshot of a \emph{step-by-step} random scenario for the language consisting of prefix notations of binary trees of
				length $6$, generated while avoiding $\Forb$. The step-by-step algorithm chooses
                one out of three possible derivations for $\sf aba{\color{red}U_6}$ using probabilities proportional to the overall weights of accessible/admissible
                words.}
				\label{fig:FullExample}
	\end{figure*}
	Let us introduce the notion of \Def{immature} words, defined as words on both the terminal and non-terminal alphabets,
    where \Def{prescribed lengths} are additionally attached to any occurrence of a symbol.
    Formally, let $\Gram=(\Voc,\NtSet,\ProdRules,\Axiom)$ be a context-free grammar, then an immature word is any word
    $$\omega\in \LangImm{\Gram} \subseteq \left((\Voc\cup \NtSet)\times \mathbb{N}^+\right)^*,$$
    where $\LangImm{\Gram}$ is the set of immature words generated from the axiom $\Axiom$.
    Such words may contain non-terminal symbols, and potentially require some further derivations before
    becoming a word on the terminal alphabet, or mature word. Intuitively, immature words correspond to intermediate states
    in a random generation scenario.

    The language associated with an immature word $\omega$ is derived from the languages of its symbols through
	\begin{equation} \Lang{\omega} = \prod_{\substack{i\in [1,|\omega|]\\s_m = \omega_i}}\Lang{s}_{m}\end{equation}
    where $\Lang{s}$ is defined as in Equation~\ref{eq:lang} with $s\in \NtSet$, and naturally extended on terminal symbols $t\in\Voc$ through $\Lang{t} = \{t\}$.
    In the following, we use the notation $\Pond(\omega)$ as a natural shorthand for $\Pond(\Lang{\omega})$ and denote by $\NtForbs{\omega}\Eqdef \Pond(\Lang{\omega}\cap \Forb)$ the total weight of all forbidden words in $\Lang{\omega}$.

	\subsection{Random generation as a random walk in language space}
	An \Def{atomic derivation}, starting from a word $\omega=\PrG{\PrG{\omega'}{\Nt}}{\omega''}\in\{\Voc\cup\NtSet\}^*$,
	is the application of a production $\Nt\Production X$ to $\omega$, that replaces $\Nt$ by the right-hand
	side $X$ of the production, yielding $\omega\Derive \omega'.X.\omega''$.
	Let us call \Def{derivation policy} a deterministic strategy that points, in an immature word, to some non-terminal 
 to be rewritten through an atomic derivation. Formally, a derivation policy is a function $\DerPol:\Lang{\Gram}\cup\LangImm{\Gram}\to\mathbb{N}\cup\{\varnothing\}$ such
	that
	$$
		\begin{array}{rrcl}
			\DerPol : & \omega\in\Lang{\Gram}& \to & \varnothing\\
							 & \omega'\in\LangImm{\Gram}& \to & i\in [1,|\omega'|] \text{ such that } \omega_i\in\NtSet .
		\end{array}
	$$

	The \Def{unambiguity} of a grammar requires that any generated word be generated by a unique sequence of derivation.
    A sequence of atomic derivations is then said to be \Def{consistent with a given derivation policy} if
	the non-terminal rewritten at each step is the one pointed by the policy.
	This notion provides a convenient
	framework for defining the \Def{unambiguity} of a grammar without explicit reference to parse trees.
	\begin{defn}[Unambiguity]
		Let $\Gram=(\Voc,\NtSet,\ProdRules,\Axiom)$ be a context-free grammar and $\DerPol$ a derivation policy acting on $\Gram$.
		The grammar $\Gram$ is said to be {\bf unambiguous} if and only if, for each $\omega\in\Voc^*$, there exists
		at most one sequence of atomic derivations that is consistent with  $\DerPol$ and produces $\omega$ from $\Axiom$.
	\end{defn}

		Any derivation leading to a mature word $\omega\in\Lang{\Gram}$ in an unambiguous grammar $\Gram$ can then be associated,
	in a one-to-one fashion, with a walk in the space of sublanguages associated with immature words, or \Def{parse walk},
	taking steps consistent with a given derivation policy $\phi$. More
	precisely, such a walk starts from the axiom $\Axiom$ and, for any intermediate immature word $X\in\LangImm{\Gram}$, the derivation policy $\phi$
    points at a position $\phi(X)$, where a non-terminal $X_k$ can be found. The parse walk can then be
	extended using one of the derivations acting on $X_k$ (See Figures~\ref{fig:MotzkinWalks} and~\ref{fig:FullExample}), until a mature word in
    $\Voc^*$ is reached.

\begin{algorithm}[t!]
\caption{Step-by-step random generation algorithm. $\GramW$ is a weighted grammar, $\omega$ an immature word,
    $\mu=\Pond(\omega)$ is the {\bf precomputed} weight of the language generated from $\omega$, and $\Forb\subset\Lang{\omega}$ is a set of forbidden words.}
\label{alg:stepbystep}
$\StepByStep(\omega,\mu,\GramW,\Forb,\DerPol):$
\begin{algorithmic}[5]
\IF {$\mu \le \NtForbs{\omega}$}
  \RETURN{{\tt Error}}
\ELSIF {$\DerPol(\omega)=\varnothing$}
  \RETURN $\omega$ \hfill\COMMENT{$\omega$ is a mature word, generation is over}
\ENDIF
  \STATE $(\omega',\Nt_m,\omega'') \leftarrow \left(\omega_{[1,\DerPol(\omega)-1]},\omega_{\DerPol(\omega)},\omega_{[\DerPol(\omega)+1,|\omega|]}\right)$
  \STATE $ r\leftarrow {\bf rand}(\mu - \NtForbs{\omega})$ \hfill \COMMENT{$r$ is random, uniformly in $[0,\Pond(\Lang{\omega}/\Forb))$}
  \IF[{\bf Union} type] {$\Nt \Production \UnG{\Nt'}{\Nt''}$}
    \STATE $\mu' \leftarrow \mu\cdot \Pond(\Nt'_m)/\Pond(\Nt_m)$
    \STATE{$r \leftarrow r - (\mu'-\NtForbs{\omega'.\Nt'_m.\omega''})$}
    \IF{$r<0$}
    \RETURN $\StepByStep(\omega'.\Nt'_m.\omega'',\mu',\GramW,\Forb)$
    \ELSE
    \RETURN $\StepByStep(\omega'.\Nt''_m.\omega'',\mu\cdot \Pond(\Nt''_m)/\Pond(\Nt_m),\GramW,\Forb)$
    \ENDIF
  \ELSIF[{\bf Product} type] {$\Nt \Production \PrG{\Nt'}{\Nt''}$}
    \FORALL[Boustrophedon order $1,n-1,2,n-2\ldots$]{$i \in [1,n-1]$}
    \STATE $\mu_i \leftarrow \mu\cdot \Pond(\Nt'_i)\cdot\Pond(\Nt''_{m-i})/\Pond(\Nt_m)$ \label{alg:stepbysteplineproduct}

    \STATE{$r \leftarrow r - (\mu_i-\NtForbs{\omega'.\Nt'_i.\Nt''_{m-i}.\omega''})$}
    \IF{$r<0$}
    \RETURN $\StepByStep(\omega'.\Nt'_i.\Nt''_{m-i}.\omega'',\mu_i,\GramW,\Forb)$
    \ENDIF
    \ENDFOR
  \ELSIF[{\bf Terminal} type]  {$\Nt \Production \T$}
      \RETURN $\StepByStep(\omega'.\T.\omega'',\mu,\GramW,\Forb,\DerPol)$
\ENDIF
\end{algorithmic}\vspace{.2em}
{Where: }{\bf rand}$(x)$: Draws a random number uniformly in $[0,x)$\\
\phantom{Where: }$\NtForbs{\omega}\Eqdef \Pond(\Lang{\omega}\cap \Forb)$: Total weight of forbidden words in $\Lang{\omega}$
\end{algorithm}

	\subsection{A step-by-step algorithm}

    Let us now describe and validate Algorithm~\ref{alg:stepbystep}, based on the recursive method introduced by Wilf~\cite{wilf77}, which uses the concepts of immature words to linearize the generation of words. More specifically, the algorithm draws a random word through a sequence of local choices (atomic derivations) using probabilities that are proportional to the cumulated weight of accessible 
non-forbidden words, as illustrated by Figure~\ref{fig:FullExample}. To grant access to such weights in reasonable time, the cumulated weights of languages generated by non-terminals are precomputed recursively~\cite{Denise2010}, and a dedicated \emph{tree-like} data structure is introduced to gain efficient access to the contribution of forbidden words.
    \begin{thm}
  Algorithm~\ref{alg:stepbystep} generates $k$ distinct words of length $n$ from a weighted grammar
  $\GramW$ in $\mathcal{O}(n\cdot |\NtSet|+k\cdot n \log n)$ arithmetic operations,
  while storing $\mathcal{O}(n\cdot |\NtSet|+k)$ numbers, and a data structure consisting of $\Theta(n\cdot k)$ nodes.
\end{thm}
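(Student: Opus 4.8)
The plan is to split the claim into a correctness part (the $k$ returned words are distinct, of length $n$, and drawn with the prescribed probability) and a complexity budget with three contributions: the recursive precomputation of weights, the per-word cost of one call to $\StepByStep$, and the maintenance of the auxiliary data structure that grants access to the forbidden-weights $\NtForbs{\cdot}$. A preliminary observation underlies all three: since $\GramW$ is in BCNF, any parse walk producing a mature word of length $n$ visits $\Theta(n)$ immature words (the associated parse tree has $n$ leaves, hence $\Theta(n)$ internal nodes), so each invocation of $\StepByStep$ performs $\Theta(n)$ atomic derivations.

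First I would establish correctness by induction on the length of the parse walk. At a union- or product-type step the algorithm draws $r$ uniformly in $[0,\mu-\NtForbs{\omega})=[0,\Pond(\Lang{\omega}\setminus\Forb))$ and selects a child $\omega_c$ after subtracting the masses $\mu_c-\NtForbs{\omega_c}=\Pond(\Lang{\omega_c}\setminus\Forb)$; hence each admissible child is chosen with probability $\Pond(\Lang{\omega_c}\setminus\Forb)/\Pond(\Lang{\omega}\setminus\Forb)$. Multiplying these ratios along the walk telescopes to $\Pond(w)/\Pond(\Lang{\GramW}_n\setminus\Forb)$ for the generated word $w$, which is exactly the weighted distribution on $\Lang{\GramW}_n\setminus\Forb$. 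Distinctness is then immediate from Algorithm~\ref{alg:sequentialnonred}: after each draw, $x$ is inserted into $\Forb$, and since the distribution assigns zero mass to $\Forb$, no word is produced twice; all $k$ words have length $n$ because the walk starts from $\Axiom_n$.

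The heart of the argument is the per-word cost. Non-product steps are $\BigO{1}$ each, contributing $\BigO{n}$ per word; the product steps are controlled by the \emph{boustrophedon} scan order $1,n-1,2,n-2,\dots$, for which reaching the chosen split $i$ of a length-$m$ factor costs $\BigO{\min(i,m-i)}$ probes. Writing $C(m)$ for the cost of generating a length-$m$ factor, one gets $C(m)\le C(i)+C(m-i)+c\cdot\min(i,m-i)$, whose worst case solves to $C(n)=\BigO{n\log n}$, as in the classical analysis of the recursive method~\cite{flajoletcalculus}. This bound is valid only if every weight appearing in a step is retrievable in $\BigO{1}$: the precomputed table $\{\Pond(\Nt_m)\}_{\Nt\in\NtSet,\,m\le n}$ supplies the $\mu_i$, and a dedicated data structure must supply each $\NtForbs{\cdot}$. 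I would realize the latter as a prefix tree of the parse walks of the forbidden words, whose node for an immature word $\omega$ is annotated with the cumulated weight of the forbidden words flowing through it, i.e. exactly $\NtForbs{\omega}$; indexing the children of a node by the derivation taken (split point or union branch) through a hash map yields $\BigO{1}$ retrieval, a miss meaning $\NtForbs{\cdot}=0$. Each forbidden word contributes a branch of $\Theta(n)$ nodes, so the structure has $\Theta(n\cdot k)$ nodes in the worst case, and inserting a freshly generated word (creating its branch and adding its weight along the path) costs $\BigO{n}$, hence $\BigO{n\cdot k}$ overall, within budget.

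It remains to add the recursive precomputation of the weight table, which I would account for in $\BigO{n\cdot|\NtSet|}$ arithmetic operations using the recursive (holonomic) evaluation of the language weights~\cite{Denise2010}, storing the $\BigO{n\cdot|\NtSet|}$ table entries plus $\BigO{k}$ scalars for the generated words. Summing the three contributions gives $\BigO{n\cdot|\NtSet|+k\cdot n\log n}$ operations, $\BigO{n\cdot|\NtSet|+k}$ stored numbers, and a data structure of $\Theta(n\cdot k)$ nodes. The main obstacle I anticipate is precisely the interface between the boustrophedon scan and the forbidden-weight structure: one must guarantee that the $\BigO{\min(i,m-i)}$ probes per product step each resolve $\NtForbs{\cdot}$ in constant time without inflating the node count beyond $\Theta(n\cdot k)$, which is what forces the children to be keyed by derivation rather than stored as length-$n$ arrays.
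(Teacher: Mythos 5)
Your proposal follows essentially the same route as the paper: precompute the table $\{\Pond(\Nt_m)\}$ in $\Theta(n\cdot|\NtSet|)$ operations, bound the per-word cost by the boustrophedon recurrence $C(m)\le C(i)+C(m-i)+c\cdot\min(i,m-i)=\BigO{n\log n}$, and serve the correcting terms $\NtForbs{\omega}$ from a weight-annotated prefix tree of the forbidden parse walks with $\Theta(n\cdot k)$ nodes and $\Theta(n)$-cost insertions. The only substantive deviation is that you key the children of a node by the derivation in a hash map ($\BigO{1}$ expected lookup) where the paper uses AVL trees ($\BigO{\log n}$ worst-case, then argues these lookups are dominated by the arithmetic on large numbers); both yield the stated bound in arithmetic operations.

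One point needs repair: your accounting of \emph{stored numbers}. As described, every one of the $\Theta(n\cdot k)$ nodes carries its own annotation $\NtForbs{\omega}$, which is $\Theta(n\cdot k)$ numbers, not the claimed $\BigO{n\cdot|\NtSet|+k}$; your ``$\BigO{k}$ scalars for the generated words'' does not cover these annotations. The paper closes this by observing that along any unary chain of the prefix tree the value $\NtForbs{\cdot}$ is constant, and a tree with $k$ leaves has at most $k-1$ branching nodes, so the representations can be shared and only $\BigO{k}$ distinct large numbers need be materialized. Adding that one observation makes your space bound go through.
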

\begin{proof}
  As discussed in Section~\ref{subsecComplex}, Algorithm~\ref{alg:stepbystep} generates a word in $\mathcal{O}(n\log(n))$ arithmetic operations, assuming that
  some correcting terms $\NtForbs{\omega}$ are available at runtime. In Section~\ref{subsec:forbidden tree}, a data structure is introduced that returns
  this value in $\mathcal{O}(\log(n))$ time. Namely, one has that $\mathcal{O}(n\log(n))$ times, a search in $\mathcal{O}(\log(n))$ is performed followed by
  an arithmetic operation involving large (at least polynomial on $n$, usually exponential) numbers. It follows that the cost of accessing the data structure
  is dominated by the cost of the following arithmetic operations, and the overall cost of generating $k$ words is in $\mathcal{O}(k\cdot n\log(n))$ arithmetic operations.
  After each generation, the data structure is updated in $\Theta(n)$ arithmetic operations, and the complexity is therefore dominated by the cost of the generation.

  The precomputation required by the  \StepByStep{} algorithm involves $\Theta(n\cdot |\NtSet|)$ arithmetic operations, and the storage of
  $\Theta(n\cdot |\NtSet|)$ numbers. The data structure for $\NtForbs{\omega}$ has $\Theta(n\cdot k)$ nodes and contains $\Theta(k)$ different numbers, thus
  the overall complexity.
\end{proof}

\subsection{Correctness}
	\begin{prop} \label{thm:lastTheorem}
        Assuming that $\mu = \Pond(\omega)$, Algorithm~\ref{alg:stepbystep} draws a word at random according to the $\Pond$-weighted distribution on $\Lang{\omega}\backslash\Forb$, or returns {\tt error} iff
		$\Lang{\omega}\backslash\Forb=\varnothing$.
	\end{prop}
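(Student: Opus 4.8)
The plan is to establish both assertions — the output distribution and the exact \texttt{error} condition — at once, by induction on a well-founded measure $\Phi(\omega)$ counting the atomic derivations still required to turn the immature word $\omega$ into a mature one. The BCNF structure makes $\Phi(\omega)$ finite, and each of the union, product and terminal steps consumes exactly one atomic derivation, so $\Phi$ strictly decreases at every recursive call; this yields termination and serves as the induction parameter. In the base case $\DerPol(\omega)=\varnothing$ one has $\Lang{\omega}=\{\omega\}$: if $\omega\in\Forb$ then $\mu=\NtForbs{\omega}$ and the algorithm returns \texttt{error} on the empty set $\Lang{\omega}\backslash\Forb$, while if $\omega\notin\Forb$ then $\NtForbs{\omega}=0<\mu$ and $\omega$ is returned with probability $1=\Pond(\omega)/\Pond(\Lang{\omega}\backslash\Forb)$.

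For the inductive step I would first verify the \emph{weight invariant} that the precondition $\mu=\Pond(\omega)$ is preserved by every recursive call, which is exactly where unambiguity is used. For a union-type non-terminal $\Nt\Production\UnG{\Nt'}{\Nt''}$, unambiguity makes $\Lang{\Nt}_m=\Lang{\Nt'}_m\sqcup\Lang{\Nt''}_m$ a disjoint union, so $\Pond(\Nt_m)=\Pond(\Nt'_m)+\Pond(\Nt''_m)$ and the updated weight $\mu'=\mu\cdot\Pond(\Nt'_m)/\Pond(\Nt_m)$ equals $\Pond(\omega'.\Nt'_m.\omega'')$; the analogous disjoint decomposition $\Pond(\Nt_m)=\sum_{i=1}^{m-1}\Pond(\Nt'_i)\Pond(\Nt''_{m-i})$ validates the product case, and the terminal step leaves the generated language, hence $\mu$, untouched. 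The same disjointness yields the decisive \emph{partition identity} for forbidden weights: the branch languages partition $\Lang{\omega}$, so their forbidden parts partition $\Lang{\omega}\cap\Forb$, whence the non-forbidden branch weights $\mu_B-\NtForbs{\omega_B}$ sum to exactly $\mu-\NtForbs{\omega}=\Pond(\Lang{\omega}\backslash\Forb)$.

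With these identities the probabilistic argument is routine. The algorithm draws $r$ uniformly in $[0,\Pond(\Lang{\omega}\backslash\Forb))$ and scans the branches, subtracting each non-forbidden branch weight until $r$ turns negative; by the partition identity this is cumulative-distribution inversion, so branch $B$ is chosen with probability $(\mu_B-\NtForbs{\omega_B})/(\mu-\NtForbs{\omega})=\Pond(\Lang{\omega_B}\backslash\Forb)/\Pond(\Lang{\omega}\backslash\Forb)$, and $r$ necessarily selects a branch of strictly positive non-forbidden weight, so no spurious \texttt{error} can be raised below as long as $\Lang{\omega}\backslash\Forb\neq\varnothing$. By unambiguity each mature word $w\in\Lang{\omega}\backslash\Forb$ belongs to a unique branch $B$, so multiplying the selection probability by the inductive hypothesis $\Prob{w\mid B}=\Pond(w)/\Pond(\Lang{\omega_B}\backslash\Forb)$ cancels the $\Pond(\Lang{\omega_B}\backslash\Forb)$ factor and gives $\Prob{w}=\Pond(w)/\Pond(\Lang{\omega}\backslash\Forb)$. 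The \texttt{error} characterization at the head of each call then follows from the weight invariant, since $\mu\le\NtForbs{\omega}$ iff $\Pond(\Lang{\omega}\backslash\Forb)=0$ iff $\Lang{\omega}\backslash\Forb=\varnothing$, using that $\Pond$ is a sum of strictly positive weights. The main obstacle is the partition identity: all the additive decompositions of both the total weights $\Pond(\Nt_m)$ and the forbidden weights $\NtForbs{\cdot}$ rest on the disjointness of the branch languages, so the proof hinges on carefully invoking unambiguity to guarantee that $\Lang{\omega}$ splits as a disjoint union across the locally available derivations.
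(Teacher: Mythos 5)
Your proof is correct and follows essentially the same route as the paper's: an induction on the number of remaining atomic derivations, the invariant $\mu=\Pond(\omega)$ propagated via $\mu'=\Pond(\omega'.\Nt'_m.\omega'')$ and $\mu_i=\Pond(\omega'.\Nt'_i.\Nt''_{m-i}.\omega'')$, branch selection with probability $\bigl(\mu_B-\NtForbs{\omega_B}\bigr)/\bigl(\mu-\NtForbs{\omega}\bigr)$, and the telescoping product with the induction hypothesis. If anything, you are slightly more complete than the paper, which leaves the disjointness/partition identity and the \texttt{error}-iff-empty characterization implicit while you spell both out via unambiguity and the strict positivity of the weights.
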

	\begin{proof}
        Let us start with some observations to simplify the proof.
        First, since $\mu = \Pond(\omega)$, then the  variables $\mu'$ and $\mu_i$ of Algorithm~\ref{alg:stepbystep} respectively obey
        \begin{align} \mu' &= \Pond(\omega)\cdot \frac{\Pond(\Nt'_m)}{\Pond(\Nt_m)} = \frac{\Pond(\omega').\Pond(\Nt_m).\Pond(\omega'')\cdot\Pond(\Nt'_m)}{\Pond(\Nt_m)} = \Pond(\omega'.\Nt'_m.\omega'') \\
        \mu_i &= \Pond(\omega)\cdot \frac{\Pond(\Nt'_i)\cdot\Pond(\Nt''_{m-i})}{\Pond(\Nt_m)} = \Pond(\omega'.\Nt'_i.\Nt''_{m-i}.\omega''). \label{eq:mu2}
        \end{align}
        Secondly for any immature word $\omega$, one has
        $$ \Pond(\omega) - \NtForbs{\omega} = \Pond(\Lang{\omega}) - \Pond(\Lang{\omega}\cap \Forb) = \Pond(\Lang{\omega}\backslash \Forb).$$

        We now show that, provided that $\mu=\Pond(\omega)$ holds, then any word in $\Lang{\omega}$ is generated with respect to a weighted distribution on $\Lang{\omega}\backslash\Forb$.
        Let $d$ be the maximum number of recursive calls needed for the generation of a mature word from a given immature word $\omega$, then one has:

		{\bf\noindent Base:} The $d=0$ case corresponds to an already mature word $\omega$, for which the associated language is limited
		to $\{\omega\}$. In this case, $\omega$ has probability $1$ in the weighted distribution, and is indeed always generated.

		{\bf\noindent Inductive step:} Assuming that the theorem holds for $d \le n$, we investigate the probabilities of emission of
		words that require $d=n+1$ derivations. Let $\Nt_m$ be the non-terminal pointed by $\DerPol$, then:
		\begin{itemize}
			\item $\Nt \Production \UnG{\Nt'}{\Nt''}$:
				Let us first assume that the derivation $\Nt^*_m\Derive \Nt'_m$ is chosen with probability
                \begin{align*}
                \frac{\mu' - \NtForbs{\omega'.\Nt'_m.\omega''} }{\mu-\NtForbs{\omega}} =& \frac{\Pond(\omega'.\Nt'_m.\omega'') - \NtForbs{\omega'.\Nt'_m.\omega''}}{\Pond(\omega)-\NtForbs{\omega}}\\
                  =& \frac{\Pond(\Lang{\omega'.\Nt'_m.\omega''}\backslash\Forb)}{\Pond(\Lang{\omega}\backslash\Forb)}.
                \end{align*}
                The recursive call to $\StepByStep(\omega'.\Nt'_m.\omega'',\mu',\GramW,\Forb)$ indeed satisfy $\mu' = \Pond(\omega'.\Nt'_m.\omega'')$,
                and subsequently generates a mature word $x$ using at most $n$ recursive calls.
                The induction hypothesis holds, and the emission probability of $x\in \Lang{\omega'.\Nt'_m.\omega''}\backslash \Forb$ is therefore given by
				${\Pond(x)}/{\Pond(\Lang{\omega'.\Nt'_m.\omega''}\backslash\Forb)}$.
				The overall probability of issuing $x$ starting from $\omega$ is then
				\begin{align*}
                    \frac{\Pond(\Lang{\omega'.\Nt'_m.\omega''}\backslash\Forb)}{\Pond(\Lang{\omega}\backslash\Forb)} . \frac{\Pond(x)}{\Pond(\Lang{\omega'.\Nt'_m.\omega''}\backslash\Forb)} = \frac{\Pond(x)}{\Pond(\Lang{\omega}\backslash\Forb)}
				\end{align*}
                in which one recognizes the weighted distribution on $\Lang{\omega}\backslash\Forb$, and the argument applies symmetrically  to $\Nt''_m$.
			\item $\Nt \Production \PrG{\Nt'}{\Nt''}$:
				A repartition $\Nt_m \Derive \PrG{\Nt'_i}{\Nt''_{m-i}}, i\in[1,m-1]$ is chosen with probability
                \begin{align*}
                \frac{\mu' - \NtForbs{\omega'.\Nt'_i.\Nt''_{m-i}.\omega''} }{\mu-\NtForbs{\omega}} =& \frac{\Pond(\omega'.\Nt'_i.\Nt''_{m-i}.\omega'') - \NtForbs{\omega'.\Nt'_i.\Nt''_{m-i}.\omega''}}{\Pond(\omega)-\NtForbs{\omega}}\\
                  =& \frac{\Pond(\Lang{\omega'.\Nt'_i.\Nt''_{m-i}.\omega''}\backslash\Forb)}{\Pond(\Lang{\omega}\backslash\Forb)}.
                \end{align*}
                A recursive call is then made on an immature word $\omega'.\Nt'_i.\Nt''_{m-i}.\omega''$, using weight $\mu_i$.
                As established in Equation~\ref{eq:mu2}, one has $\mu_i=\Pond(\omega'.\Nt'_i.\Nt''_{m-i}.\omega'')$, therefore the induction hypothesis
                applies, and any word $x\in\Lang{\omega'.\Nt'_i.\Nt''_{m-i}.\omega''}$ is generated by the recursive call with
                probability
                $$  \frac{\Pond(x)}{\Pond(\Lang{\omega'.\Nt'_i.\Nt''_{m-i}.\omega''}\backslash\Forb)}.$$
                The emission probability of $x\in\Lang{\omega'.\Nt'_i.\Nt''_{m-i}.\omega''}$ from $\omega$ is then given by
                $$ \frac{\Pond(\Lang{\omega'.\Nt'_i.\Nt''_{m-i}.\omega''}\backslash\Forb)}{\Pond(\Lang{\omega}\backslash\Forb)}.\frac{\Pond(x)}{\Pond(\Lang{\omega'.\Nt'_i.\Nt''_{m-i}.\omega''}\backslash\Forb)} = \frac{\Pond(x)}{\Pond(\Lang{\omega}\backslash\Forb)}.$$
			\item $\Nt \Production \T$: The emission probability for any word $x$ emitted from $\omega$ equals that of the word
			issued from $\omega'.\T.\omega''$. It is then given by $\frac{\Pond(x)}{\Pond(\Lang{\omega'.\T.\omega''}\backslash\Forb)}=\frac{\Pond(x)}{\Pond(\Lang{\omega}\backslash\Forb)}$ according to the
			induction hypothesis, which applies since $\Pond(\omega'.\T.\omega'') = \Pond(\omega'.\Nt.\omega'')$.
		\end{itemize}
	\end{proof}
\subsection{Complexities and data structures} \label{subsecComplex}
	The overall complexity of Algorithm~\ref{alg:stepbystep} depends critically on efficient algorithms and data structures for:
	\begin{enumerate}
		\item Accessing the weights of languages associated with non-terminals.
		\item Computing the total weight $\NtForbs{\omega}\Eqdef\Pond(\Lang{\omega}\cap \Forb)$ of all forbidden words accessible from an immature word $\omega$.
		\item Investigating the \emph{partitions} $\Nt^*_m \Derive \PrG{\Nt'_i}{\Nt''_{m-i}}$ for \emph{product rules}.\label{alg:Boustrophedon}
		\item Handling large numbers.\label{alg:Floats}
	\end{enumerate}
	\subsubsection{Weights of non-terminals}\label{sec:weights}
		As is usual within the recursive approach~\cite{deniserandom}, the total weights $\Pond(\Nt_i)$ of languages generated from each non-terminal $\Nt$ must be readily available during the generation at generation time. A precomputation of these numbers can be performed in $\Theta(n)$ arithmetic operations, thanks to
		the algebraic, therefore holonomic, nature of the weighted counting generating functions. Indeed, the coefficients of an holonomic
		generating function obey a linear recurrence with polynomial coefficients in $n$. Such a recurrence can be algorithmically determined from the system 
   of functional equations induced by the context-free grammar (e.g. using the {\tt Maple} package {\tt GFun}~\cite{SalZim94}). 
	\subsubsection{A data structure for forbidden words}\label{subsec:forbidden tree}
		\begin{figure*}[t]
				{\centering
					 {\resizebox{\textwidth}{!}{\input{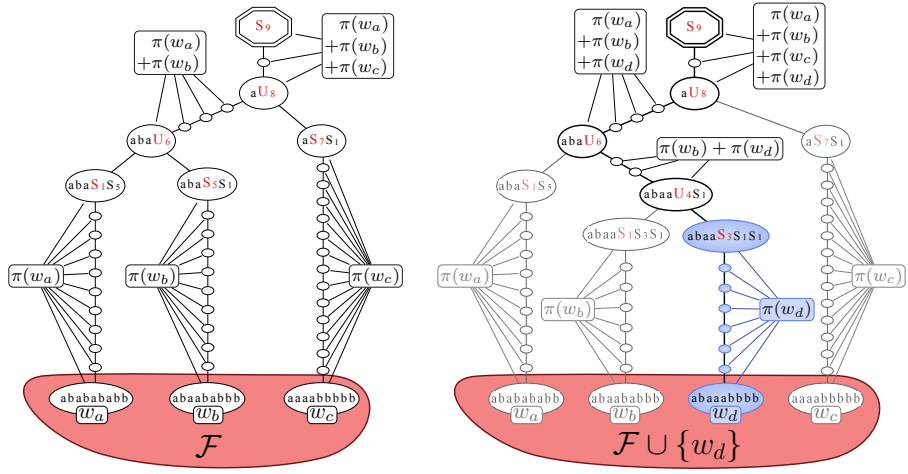}}}\\}

				\caption{Illustration of the update operation for the weighted tree for forbidden walks, for our running example. Initial tree (Left): Each node is associated with an immature word $\omega$ and its overall weight of forbidden words $\NtForbs{\omega}$ (some unary nodes are contracted for the sake of readability). During the execution of Algorithm~\ref{alg:stepbystep}, the tree is traversed to grant efficient access to $\NtForbs{\omega}$. Updated tree (Right): After generating a new (mature) word $w_d:={\sf abaaabbbb}$, the proper suffix of the parse walk is added to the tree (Blue nodes), associated with the additional weight $\Pond{w_d}$, which must then be propagated back to the root (bold branch), using at most $\Theta(n)$ arithmetic operations.}
				\label{fig:PrefixTree}
	\end{figure*}
        \begin{prop}
          The total weight $\NtForbs{\omega}$ of all forbidden words generated from an immature word $\omega$ can be accessed by Algorithm~\ref{alg:stepbystep} in $\mathcal{O}(\log(n))$ time, 
          at the cost of an update operation in $\Theta(n)$ arithmetic operations, while storing $\Theta(|\Forb|)$ additional numbers.
        \end{prop}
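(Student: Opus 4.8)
The plan is to store the forbidden set $\Forb$ not as a flat list of words but as a prefix tree $\PrefixTree$ of their \emph{parse walks}, consistent with the derivation policy $\DerPol$, exactly as depicted in Figure~\ref{fig:PrefixTree}. First I would formalize the data structure: its root is the immature axiom $\Axiom$, and each node is an immature word $\omega$ reached from $\Axiom$ by a policy-consistent derivation that is a prefix of the parse walk of at least one $f\in\Forb$. The children of $\omega$ encode the atomic derivations of the non-terminal $\omega_{\DerPol(\omega)}$ actually used by some forbidden word; in the product case they are indexed by the split point $i$ and stored in a balanced search tree keyed by $i$, so that there are at most $n-1$ of them. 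Each node carries the value $\NtForbs{\omega}$.

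The correctness of the stored value is the first lemma I would prove: for every node $\omega$ of $\PrefixTree$, the set of leaves of the subtree rooted at $\omega$ is exactly $\Forb\cap\Lang{\omega}$. This rests on unambiguity together with the determinism of $\DerPol$: every forbidden word admits a unique policy-consistent parse walk, so that $\omega$ lies on the walk of $f$ if and only if $\omega\Derive^* f$, i.e. $f\in\Lang{\omega}$. Since $\Pond$ is additive over sets of words, the value stored at $\omega$ is then $\sum_{f\in\Forb\cap\Lang{\omega}}\Pond(f)=\Pond(\Lang{\omega}\cap\Forb)=\NtForbs{\omega}$; moreover a candidate successor immature word absent from $\PrefixTree$ has no forbidden word in its language, so its correcting term is $0$.

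For the access claim I would have Algorithm~\ref{alg:stepbystep} maintain, alongside the current immature word, a pointer to the corresponding node of $\PrefixTree$ (or a null pointer once the walk leaves the tree). Each query for $\NtForbs{}$ of a candidate successor is then resolved by a single lookup among the children of the current node: $\BigO{1}$ in the union and terminal cases, and $\BigO{\log n}$ in the product case through the balanced tree keyed by the split index, since there are at most $n-1$ children. For the update claim, adding a freshly generated word $w$ amounts to inserting its parse walk: one follows the longest already-present prefix, grafts the diverging suffix as new nodes, and adds $\Pond(w)$ to the value of every node lying on the walk of $w$, from the new leaf up to the root (the bold branch of Figure~\ref{fig:PrefixTree}). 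As a length-$n$ word has a parse walk of $\Theta(n)$ derivations in BCNF, this costs $\Theta(n)$ arithmetic operations.

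Finally, for the storage claim I would observe that $\NtForbs{}$ is constant along any maximal unary (non-branching) path of $\PrefixTree$, since such a path does not change the set of forbidden descendants; I would therefore let all nodes of a unary segment share a single stored number. A tree with $|\Forb|$ leaves has at most $|\Forb|-1$ branching nodes, hence $\BigO{|\Forb|}$ unary segments, which bounds the number of distinct stored numbers by $\Theta(|\Forb|)$ while the number of (number-free) nodes remains $\Theta(n\cdot|\Forb|)$. The main obstacle is maintaining these three requirements \emph{simultaneously} under updates: keeping the children in balanced order for $\BigO{\log n}$ access, and correctly splitting the shared number of a unary segment when an insertion introduces a new branch point, all within $\Theta(n)$ arithmetic operations. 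I would resolve this by noting that each insertion creates at most one new branch point, hence $\BigO{1}$ new distinct numbers, and touches only the $\Theta(n)$ segments lying along the updated path.
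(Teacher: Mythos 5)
Your proposal is correct and follows essentially the same route as the paper: a weighted prefix tree of policy-consistent parse walks, a pointer maintained alongside the current immature word, balanced search trees over the children for $\BigO{\log n}$ access in the product case, suffix grafting plus upward weight propagation for the $\Theta(n)$ update, and sharing of the stored number along unary segments to reach $\Theta(|\Forb|)$ numbers. The only point you assert rather than argue is that a parse walk of a length-$n$ word has $\Theta(n)$ derivations in BCNF, which the paper establishes by counting terminal, product and union rule applications (the last bounded via unambiguity); otherwise your write-up is, if anything, more explicit than the paper's about the correctness invariant and the interaction between balancing and number sharing.
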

        \begin{proof}
					Let us first remark that, in any BCNF grammar,  any parse walk $p_n$ that produces a mature word of length $n$, 
involves $\Theta(n)$ derivations (i.e. has length in $\Theta(n)$). To that purpose, let us discuss the number of occurrences of each type of rules in $p_n$, by reasoning on the associated parse tree. First, let us observe that each letter in the mature word can be bijectively associated with the application of a terminal rule, thus $p_n$ contains exactly $n$ applications of terminal rules. Then, product rules induce a binary structure in the parse tree, whose leaves correspond to the $n$ terminal letters. Therefore, $p_n$ contains exactly $n-1$ applications of product rules. Finally, sequences of union-type rules can be found before any occurrence of a product or terminal rule. However, it should be noted that the unambiguity of the grammar forbids derivations of the form $\Nt \Rightarrow^{*} \Nt$. The length of any union-type derivations sequence therefore cannot exceed $|\NtSet|+1$. Treating $|\NtSet|$ as a constant, the total number of occurrences of union rules is then in $\mathcal{O}(n)$, and we conclude that the total number of derivations involved in $p_n$ is indeed $\Theta(n)$.

        Assume now that the parse walks of the elements of $\Forb$ are available as a set $\mathcal{T}$  of sequences of immature words.  We introduce a data structure, the \Def{weighted tree of forbidden walks}, a decorated prefix-tree whose nodes are in bijection with the set of immature words in $\mathcal{T}$, and such that the overall weight $\NtForbs{\omega}\Eqdef\Pond(\Lang{\omega}\cap \Forb)$ is attached to each node labeled $\omega$.

        The idea is to descend into the tree during the execution of Algorithm~\ref{alg:stepbystep}, simply fetching the precomputed contributions $\NtForbs{\omega}$ of forbidden words, that are attached to local nodes. Implementation-wise, an argument $g$ is added to Algorithm~\ref{alg:stepbystep} (omitted in the pseudocode for the sake of readability),  holding the node associated with $\omega$ if any, or $\varnothing$ otherwise. One then gets access in $O(1)$ operations to
        the forbidden weight $\NtForbs{\omega}$ of $\omega$, and in $O(\log(n))$ to that of its children nodes $\NtForbs{\omega'.\Nt'.\omega''}$,
        $\NtForbs{\omega'.\Nt''.\Nt''_{m-i}.\omega''}$, or $\NtForbs{\omega'.\Nt'_i.\Nt''_{m-i}.\omega''}$, e.g. using AVL trees~\cite{Adelson-Velskii1962} to store the children of a node.
        Once an atomic derivation $\omega\Derive \omega'$ is chosen at random, the suitable child $g'$ of $g$ (or $\varnothing$, if no word from $\Forb$
        can be computed from $\omega$), is fed to the recursive call.

		A \Def{tree update} operation must then be performed, as illustrated by
        Figure~\ref{fig:PrefixTree}:
        \begin{itemize}
        \item First, a \emph{top-down} stage descends into the tree, ensuring efficient access to $\NtForbs{\omega}$,
        until a new mature word $w_d$ is generated. Absent nodes are then added, corresponding to the proper suffix of the parse walk (Blue nodes). \\At each step, one needs to test the presence/absence of a given immature word within the children of the current node. Since the degree of a node is bounded by $\Theta(n)$, then this operation can be performed in $\Theta(\log(n))$ time, using a dedicated AVL tree to store the children of a node. The total time complexity of a single top-down descent is therefore in $\Theta(n\log(n))$ basic instructions.
        \item Then, a unique new weight $\NtForbs{w_d}$ is created and attached to the nodes in the proper suffix of the parse walk (Blue nodes).
        A \emph{bottom-up} stage propagates the weight of the generated (mature) word
        to his ancestors, all the way up to the root. The weights associated with branching nodes along the path are incremented by $\NtForbs{w_d}$. Since $\Theta(n)$ nodes can be found from the leaf
        to the initial immature word $\Axiom_n$, then the complexity of this stage is at most in $\Theta(n)$ arithmetic operations.
        \end{itemize}

Note that the immature words used to label each node do not require an explicit encoding (which may otherwise result in a $\Theta(n^2)$ time complexity). Indeed, the immature words found on consecutive nodes may only differ on at most two positions, owing to the binary nature of products. Therefore, one may only store the difference between consecutive immature words, leading to a space complexity in $\Theta(|\Forb|\cdot n)$ bits. 
By the same token, the memory requirement can be limited to $2\cdot|\Forb|$ large numbers, by observing that a unary node and its unique successor have same value for $\NtForbs{\cdot}$, and that the memory representation this number can be shared.
        \end{proof}

	\subsubsection{Boustrophedon order for product non-terminals}\label{sec:bous}
	For product-type non-terminal rules, one may possibly have to investigate $\Theta(n)$ possible candidate partitions of the length, leading to a worst-case complexity in $\Theta(n^2)$ arithmetic operations.
  Therefore, we use a Boustrophedon order~\cite{flajoletcalculus} $(1,n-1,2,n-2,\ldots)$ to investigate possible decompositions $\Nt_m \Derive \Nt'_i.\Nt''_{m-i}$. As previously shown~\cite{flajoletcalculus}, this simple device reduces the total number of execution of the body of the innermost loop (Algorithm~\ref{alg:stepbystep}, line~\ref{alg:stepbysteplineproduct}) to
        $\mathcal{O}(n\log(n))$ in the worst case scenario.

	\subsubsection{Arbitrary precision arithmetics}

		Although efficient algebraic generators exist even for some classes of transcendent probabilities~\cite{Flajolet2011}, it is reasonable, for
        all practical purpose, to assume that weights are provided as
        floating point numbers of bounded (yet arbitrarily large) precision. 
  Since the language is context-free, the numbers involved in the precomputations of $\Nt_i$
        and in the tree of forbidden words scale like $\mathcal{O}(\alpha^n)$ for some explicit $\alpha$.
        It follows that operations performed on such numbers may take time $\mathcal{O}(n\log(n)\log\log(n))$~\cite{Van02}, while the space occupied by their encoding grows like $\mathcal{O}(n)$.

\section{Non-redundant unranking algorithm}
\label{sec:unrank}
As an alternative approach, let us propose a weighted {\it unranking algorithm}, which consists in two distinct parts:
\begin{itemize}
\item An unranking algorithm for generating words from a weighted context free grammar, presented in Section~\ref{UnrankingSec}
\item An algorithm that samples random numbers uniformly within a \emph{gapped} union of intervals,
to be used in the unranking algorithm to ensure non-redundant generation, presented in Section~\ref{sec:modRand}.
\end{itemize}
Our main result is summarized by the following theorem.
\begin{thm}
  Using an unranking approach, $k$ distinct words of length $n$ can be generated from a weighted grammar
  $\GramW$ in $\mathcal{O}(n\cdot |\NtSet|+k\cdot n \log n)$ arithmetic operations,
  while storing $\mathcal{O}(n\cdot |\NtSet|+k)$ large numbers.
\end{thm}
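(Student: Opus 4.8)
The plan is to prove the stated bounds by following the two-part decomposition announced at the start of Section~\ref{sec:unrank} and plugging it into the sequential meta-algorithm (Algorithm~\ref{alg:sequentialnonred}). Concretely, I would reduce the theorem to four independent estimates: (i) the cost of the initial precomputation; (ii) the cost of a single unranking, i.e.\ of recovering the word associated with a given weighted rank $r\in[0,\Pond(\Lang{\Gram}_n))$; (iii) the cost of drawing a \emph{non-redundant} rank that avoids the intervals of the already generated words, together with the cost of inserting a freshly generated word; and (iv) the total storage. Summing (ii)$+$(iii) over the $k$ iterations and adding (i) should yield the claimed $\BigO{n\cdot|\NtSet|+k\cdot n\log n}$, while (iv) gives the space bound.

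For (i) and (ii) I would rely on the unranking procedure of Section~\ref{UnrankingSec}. As in the recursive method, the weights $\Pond(\Nt_i)$ of every non-terminal at every length $i\le n$ are precomputed once, in $\Theta(n\cdot|\NtSet|)$ arithmetic operations and storage (Section~\ref{sec:weights}). Unranking is then a deterministic recursive descent mirroring \StepByStep{}: at a union node one compares $r$ against $\Pond(\Nt'_m)$ to choose a branch, at a terminal node one emits the letter, and at a product node $\Nt_m\Derive \PrG{\Nt'_i}{\Nt''_{m-i}}$ one locates the unique split $i$ whose cumulated weight brackets $r$. The only source of a quadratic blow-up is this search over the $m-1$ candidate splits; exactly as in Section~\ref{sec:bous}, scanning them in Boustrophedon order $(1,n-1,2,n-2,\dots)$ caps the total number of inspected partitions at $\BigO{n\log n}$ over the whole descent. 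Hence one word is unranked in $\BigO{n\log n}$ arithmetic operations.

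For (iii) I would exploit the fact that every generated word $w$ occupies a weighted interval $[c_w,c_w+\Pond(w))\subset[0,\Pond(\Lang{\Gram}_n))$, so avoiding the previously generated words amounts to sampling uniformly from $[0,\Pond(\Lang{\Gram}_n))$ with at most $k$ intervals removed --- a \emph{gapped} union of intervals, handled by the \modRandom{} routine of Section~\ref{sec:modRand}. Storing the gaps in a balanced search tree augmented with subtree weight-sums lets one draw a number uniformly in the compressed range $[0,\Pond(\Lang{\Gram}_n)-\NtForbs{\Axiom_n})$ and map it back to a true, gap-avoiding rank $r$ in $\BigO{\log k}$ operations, and insert the new interval in $\BigO{\log k}$, storing $\Theta(k)$ numbers. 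Since $k\le|\Voc|^n$ forces $\log k\in\BigO{n}$, the per-word sampling cost $\BigO{\log k}$ is dominated by the unranking cost $\BigO{n\log n}$, so step (iii) contributes nothing beyond $\BigO{n\log n}$ per word.

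Combining, the $k$ iterations cost $\BigO{k\cdot n\log n}$ on top of the $\Theta(n\cdot|\NtSet|)$ precomputation, and the storage is $\BigO{n\cdot|\NtSet|}$ for the non-terminal weights plus $\Theta(k)$ for the gap tree, giving both announced bounds. The main obstacle I anticipate is twofold: first, proving that the Boustrophedon scan genuinely bounds the cumulated split-search work at $\BigO{n\log n}$ for unranking and not merely for forward generation, which requires revisiting the divide-and-conquer recurrence of~\cite{flajoletcalculus}; and second, establishing that composing gap-avoiding uniform-weighted sampling with unranking realizes exactly the target set distribution~\eqref{eq:setDistribution}, i.e.\ that each draw reproduces $\Prob{w\;|\;\Pond,\Lang{\Gram}_n\backslash\Forb}=\Pond(w)/(\Pond(\Lang{\Gram}_n)-\NtForbs{\Axiom_n})$ and that the resulting sequence of dependent draws telescopes into~\eqref{eq:setDistribution}, as already argued for the rejection and step-by-step schemes.
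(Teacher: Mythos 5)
Your decomposition---precompute the non-terminal weights in $\Theta(n\cdot|\NtSet|)$ operations, unrank each word in $\BigO{n\log n}$ via a Boustrophedon scan of the product splits, and handle non-redundancy by drawing from the gapped interval with a weight-augmented balanced search tree at $\BigO{\log k}$ per draw and insertion storing $\Theta(k)$ numbers---is exactly the paper's proof, which likewise delegates the unranking cost to Section~\ref{sec:unranking}, the gap-avoiding sampling to Section~\ref{sec:modRand}, and sums the contributions. The two caveats you flag (adapting the Boustrophedon analysis and the total order to the unranking descent, and verifying that \modRandom{} composed with unranking yields the weighted distribution on $\Lang{\Gram}_n\backslash\Forb$) are precisely the points the paper discharges in the proofs of its unranking proposition and of Proposition~\ref{propModRandom}.
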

\begin{proof}
  In Section~\ref{sec:unranking}, we introduce Algorithm~\ref{alg:unrank}, a general unranking procedure which transforms, in $\mathcal{O}(n\log(n))$ arithmetic operations,
  any random number drawn uniformly in the interval $[0,\Pond(\Lang{\GramW}_n)[$ into a random word in $\Lang{\GramW}_n$ with respect to a weighted distribution.
  Furthermore, Section~\ref{sec:modRand} introduces a dedicated data structure, coupled with Algorithm~\ref{modRandom} which draws numbers in the subset
  $[0,\Pond(\Lang{\GramW}_n)[$ while avoiding contributions of  forbidden words, and uses $\mathcal{O}(k\log(k))$ arithmetic operations.

  The precomputation required by the Algorithm~\ref{alg:unrank} involves $\Theta(n\cdot |\NtSet|)$ arithmetic operations, and a storage of $\Theta(n\cdot |\NtSet|)$ numbers. Maintaining the data structure used by Algorithm~\ref{modRandom} requires the storage of  $\Theta(k)$ numbers.
\end{proof}

\subsection{Weighted Unranking algorithm} \label{UnrankingSec}

Unranking algorithms, formalized by Wilf~\cite{wilf77}, usually take as input a \Def{rank} in the interval $[0,|\La|)$, for $|\La|$ the number of
words in a language, and output a word from the language that is uniquely associated to this rank according to some predefined ordering. It follows that calling an unranking procedure, starting from a uniformly-generated rank, immediately gives a uniformly generated random object.

Generic unranking algorithms have been proposed for the uniform generation of words from a context-free language~\cite{Martinez00ageneric}.
Through grammar transformations aiming at the introduction a controlled ambiguity, Weinberg and Nebel~\cite{WeinbergNebel} extended their
construct to special cases of non-uniform generation. For the sake of self-completeness, we reformulate, and mildly generalize, the above algorithms. 

\subsubsection{Statement of the problem}
For a given length $n$, let us assume a total ordering on the words in $\Lang{S}_n$, and denote by $w_1, \dots, w_{|\Lang{S}|}$ the ordered list of words in $\Lang{S}_n$.
One can then split the interval $[0, \pi(\Lang{S})[$ into $|\Lang{S}|$ pieces of width $\pi(w_1), \Pond(w_2), \dots, \Pond(w_{|\Lang{S}|})$ respectively, each piece being associated to a particular word.  Denoting the $j$-th interval by $I_j$ , one has
$$I_j=\left[\sum_{k=1}^{j-1} \pi(w_k), \sum_{k=1}^{j} \pi(w_k)\right[.$$

The goal of our generalized unranking is to take as input a number $r \in [0, \Pond(\Lang{S})[$, to figure out the interval $I_j=[L_j,R_j[$ such that $L_j\le r < R_j$, and to return the corresponding word $w_k$.  Upon starting the unranking procedure from a uniformly generated random real number in $[0,\Pond(\Lang{S})[$, this word is to be selected with probability proportional to the width of its interval, i.e. its weight. It follows that the whole procedure constitutes a random generation algorithm for the weighted probability distribution presented in Equation~\ref{eq:relativeProb}.

\subsubsection{Total ordering for words of length $n$}
For each non-terminal $\Nt\in\NtSet$, let us introduce a dedicated order relation $\cdot\OrderRel_{\Nt}\cdot$, defining an arbitrary notion of precedence on $\Lang{\Nt}_{m\le n}$ the  set of words of length $m$ generated from $\Nt$.
For the sake of simplicity, let us write $\mathcal{A} \OrderRel_{\Nt} \mathcal{B}$ as a shorthand for $a\OrderRel_{\Nt} b, \forall (a,b)\in\mathcal{A}\times\mathcal{B}$.
The order relation  $\cdot\OrderRel_{\Nt}\cdot$ is defined 
by $w\OrderRel_{\Nt} w, \forall w\in\Lang{\Nt}_{m\le n}$, and recursively defined by:

\begin{itemize}
 \item {\bf Union type} $\Nt \to \Nt'\;|\;\Nt''$. Then,  $\forall m\le n$, one has:
\begin{itemize}
\item  $\Lang{\Nt'_{m}}\OrderRel_{\Nt} \Lang{\Nt''_{m}}$;
   \item $\forall w_1,w_2\in\Lang{\Nt'_{m}}$ (resp. $\Lang{\Nt''_{m}}$), $w_1 \OrderRel_{\Nt} w_2$ iff $w_1 \OrderRel_{\Nt'} w_2$ (resp. $w_1 \OrderRel_{\Nt''} w_2$).
\end{itemize}
 \item {\bf Product type} $\Nt \to \Nt'.\Nt''$. Then,  $\forall m\le n$, $\forall j,j' \in [1,m-1]$, one has:
   \begin{itemize}
     \item If $j<j'$, then $\Lang{\Nt'_{j}.\Nt''_{m-j}}\OrderRel_{\Nt} \Lang{\Nt'_{j'}.\Nt''_{m-j'}}$;
     \item If $j=j'$ then $\forall (u,v), (u',v') \in \Lang{\Nt'_{j}}\times \Lang{\Nt''_{m-j}}$:
  \begin{itemize}
     \item If $u\OrderRel_{\Nt'} u'$, then $u.v \OrderRel_{\Nt} u'.v'$;
     \item If $u= u'$, then $u.v \OrderRel_{\Nt} u'.v'$ iff $v\OrderRel_{\Nt''} v'$.
    \end{itemize}
   \end{itemize}
 \item {\bf Terminal type} $\Nt \to\T$: $\Lang{\Nt_n}=\{t\}$, and one has $t\OrderRel_{\Nt} t$.
\end{itemize}
Let us then denote by $\cdot\OrderRel_{r}\cdot:=\cdot\OrderRel_{\Axiom}\cdot$ the order induced on the language generated by the axiom $\Axiom$ of the grammar. It is easily verified that $\cdot\OrderRel_{r}\cdot$ constitutes a total order over $\Lang{\Axiom_n}$.

\begin{figure}[t]
{\centering{\resizebox{\textwidth}{!}{
\begin{tikzpicture}   
\newcommand{\MyDim}{150pt}
\newcommand{\SmallSquareDim}{70pt} 
\newcommand{\MidSquareDim}{70pt} 
\newcommand{\BigSquareHeight}{170pt} 
\newcommand{\BigSquareWidth}{200pt} 
\newcommand{\PipeDownL}{1pt} 
\newcommand{\PipeDownR}{10pt} 
\newcommand{\PipeHeight}{10pt} 
\newcommand{\PipeWidth}{6pt} 
\newcommand{\IntMat}{40pt} 
\newcommand{\lthick}{1pt} 
\newcommand{\PipePath}[2]{
node (tmp1) at ($#1$) {}  
node (tmp2) at ($#2$) {}  
node (tmp3) at (tmp2|-tmp1) {}  
node (tmp4) at ($(tmp3)+(0,\PipeHeight)$) {} 
node (tmp11) at  ($(tmp4.center)+(\PipeWidth,0)$) {} 
node (tmp12) at ($(tmp4.center)+(0,-\PipeWidth)$) {}
($#1+(0,-\PipeDownR)$) 
-- node[pos=0] (end1) {} node[pos=1] (end2) {}  ++(\PipeWidth,0em) 
-- node[pos=1] (tmp5) {}  ++(0,\PipeDownR+\PipeHeight-\PipeWidth) 
to[in=0,out=90] ++(-\PipeWidth,\PipeWidth) 
to node[pos=1] {} (tmp11.center) 
to[in=90,out=180] node[pos=1] {} (tmp12.center)
-- node[pos=1] {} ($#2+(0,-\PipeDownL)$)
-- node[pos=1] {} ++(\PipeWidth,0em) 
|- node[pos=1] {} ($#1+(0,\PipeHeight-\PipeWidth)$) 
-- cycle;}

\newcommand{\Pipe}[2]{\path[pipe,left color=bleu1] \PipePath{#1}{#2}}
\newcommand{\FullPipe}[2]{\path[pipe,left color=bleu1,right color=bleu1!70] \PipePath{#1}{#2}}
\newcommand{\EmptyPipe}[2]{\path[pipe,fill=bleu1!5] \PipePath{#1}{#2}}
\colorlet{bleu1}{gray!65!blue!80!white}
\colorlet{bleu2}{bleu1!30!white}
\colorlet{pipe1}{bleu1!70!black}
\colorlet{square1}{bleu1!70!black}
\colorlet{unsure1}{bleu1}

\tikzstyle{squares}=[rectangle, draw=square1, left color=bleu1, right color=bleu2,minimum height=\SmallSquareDim, minimum width=\SmallSquareDim,line width=\lthick]
\tikzstyle{basicPoint}=[fill=none,draw=none, inner sep=0]
\tikzstyle{unsure}=[draw=square1,dashed,line width=\lthick]
\tikzstyle{vunsure}=[draw=unsure1,line width=\lthick,dashed]
\tikzstyle{sure}=[draw=square1,line width=\lthick]
\tikzstyle{hsure}=[draw=unsure1,line width=\lthick]
\tikzstyle{lbl}=[|-|,draw=square1,line width=1pt]
\tikzstyle{lbltxt}=[inner sep=2pt]
\tikzstyle{lblb}=[stealth-stealth,line width=1pt, draw=square1]
\tikzstyle{lblx}=[lbl,latex-, draw=square1]
\tikzstyle{pipe}=[line width=\lthick, draw=pipe1]

\node[squares] (sq1) {};
\node[squares,right=\IntMat of sq1.south east,anchor = south west,minimum height=1.3*\SmallSquareDim, minimum width=\MidSquareDim] (sq2) {};
\node[fill=none,draw=none,right=\IntMat of sq2.south east,anchor = south west,minimum height=\BigSquareHeight, minimum width=\BigSquareWidth] (sq3) {};

\node[basicPoint] (a0) at (sq3.north west){};
\node[basicPoint] (a1) at (sq3.35){};
\node[basicPoint] (a2) at (sq3.20){};
\node[basicPoint] (a3) at (sq3.-5){};
\node[basicPoint] (a4) at (sq3.-18){};
\node[basicPoint] (a5) at (sq3.-35){};
\node[basicPoint] (a6) at (sq3.south east){};

\node[basicPoint] (b0) at (sq3.north west){};
\node[basicPoint] (b1) at (sq3.128){}; 
\node[basicPoint] (b2) at (sq3.116){};
\node[basicPoint] (b3) at (sq3.97){};
\node[basicPoint] (b4) at (sq3.67){};
\node[basicPoint] (b5) at (sq3.52){};
\node[basicPoint] (b6) at (sq3.north east){};

\foreach \i in {0,...,6}
{
\node[basicPoint] (c\i) at (sq3.north west |- a\i){};
}

\foreach \i in {0,...,6}
{
\node[basicPoint] (d\i) at (sq3.south west -| b\i){};
}

\foreach \j in {0,...,6}
{
\foreach \i in {0,...,6}
{
  \node[draw=none,inner sep=0] (s-\i-\j) at (b\i.center |- a\j.center) {};
}
}

\foreach \i in {0,...,6}
{
  \path (s-\i-0.center) edge[unsure] (s-\i-6.center);
  \path (s-0-\i.center) edge[vunsure] (s-6-\i.center);
}

  \path (s-0-0.center) edge[unsure,draw=white] (s-6-0.center);
  \path (s-0-6.center) edge[unsure,draw=white,solid] (s-6-6.center);
  \path (s-0-6.center) edge[unsure] (s-6-6.center);
  \path (s-0-0.center) edge[unsure,draw=white,solid] (s-6-0.center);
  \path (s-0-0.center) edge[unsure] (s-6-0.center);

\foreach \i in {0,1,3,5}
{
  \pgfmathtruncatemacro\k{\i+1}
\foreach \j in {0,2,4,5}
{
  \pgfmathtruncatemacro\l{\j+1}
  \path (s-\i-\j.center) edge[hsure] (s-\k-\j.center);
  \path (s-\i-\j.center) edge[sure] (s-\i-\l.center);
  \path (s-\i-\l.center) edge[hsure] (s-\k-\l.center);
  \path (s-\k-\j.center) edge[sure] (s-\k-\l.center);
}
  \path (s-\i-6.center) edge[sure] (s-\k-6.center);
  \path (s-\i-0.center) edge[sure] (s-\k-0.center);
}

\FullPipe{(sq2.north west)+(+5pt,0)}{(sq1.north east)+(-10pt,0)} 
\FullPipe{(sq3.north west)+(+5pt,0)}{(sq2.north east)+(-10pt,0)} 
\FullPipe{(sq3.north west)+(+5pt,0)}{(sq2.north east)+(-10pt,0)} 
\FullPipe{(s-1-0.center)+(+5pt,0)}{(s-1-0)+(-10pt,0)} 
\FullPipe{(s-2-0)+(+5pt,0)}{(s-2-0)+(-10pt,0)} 
\EmptyPipe{(s-4-0)+(+5pt,0)}{(s-4-0)+(-10pt,0)} 
\EmptyPipe{(s-5-0)+(+5pt,0)}{(s-5-0)+(-10pt,0)} 
\Pipe{(s-3-0)+(+5pt,0)}{(s-3-0)+(-10pt,0)} 
\begin{pgfonlayer}{background}

  \path[left color=bleu1,right color=bleu2,decoration={snake,amplitude=.8pt},draw=black] decorate{($(end1)$) -- ($(s-3-3)+(15pt,0)$)} -- ($(s-4-3)+(-15pt,0)$) decorate{-- (end2.center)} -- cycle;

\foreach \i in {0,1,2}
{
\foreach \j in {0,...,5}
{
  \pgfmathtruncatemacro\k{\i+1}
  \pgfmathtruncatemacro\l{\j+1}
  \fill[left color=bleu1,right color=bleu2] (s-\i-\j.center) -- (s-\k-\j.center) -- (s-\k-\l.center) -- (s-\i-\l.center) -- cycle;
}
}

\foreach \j in {3,...,5}
{
  \pgfmathtruncatemacro\i{3}
  \pgfmathtruncatemacro\k{\i+1}
  \pgfmathtruncatemacro\l{\j+1}
  \fill[left color=bleu1,right color=bleu2] (s-\i-\j.center) -- (s-\k-\j.center) -- (s-\k-\l.center) -- (s-\i-\l.center) -- cycle;
}

  \path[left color=bleu1,right color=bleu2,decoration={snake,amplitude=.8pt},draw=bleu1!70!black] ($(s-3-2.center)+(0,-1)$) decorate{-- ($(s-4-2.center)+(0,-1)$)} -- ($(s-4-3.center)$) -- (s-3-3.center) -- cycle;

\end{pgfonlayer}{background}

 \newcommand{\ArrDist}{7pt}
  \draw[lbl] ($(sq1.south west)+(0,-\ArrDist)$) -- node[lbltxt,pos=.5,below] {\relsize{+2}$\pi(N'_{i-2})$} ($(sq1.south east)+(0,-\ArrDist)$);
  \draw[lbl] ($(sq1.north west)+(-\ArrDist,0)$) -- node[lbltxt,left,pos=.5] {\rotatebox{90}{\relsize{+2}$\pi(N''_{n-i+2})$}}   ($(sq1.south west)+(-\ArrDist,0)$);
  \draw[lbl] ($(sq2.south west)+(0,-\ArrDist)$) -- node[lbltxt,pos=.5,below] {\relsize{+2}$\pi(N'_{i-1})$} ($(sq2.south east)+(0,-\ArrDist)$);
  \draw[lbl] ($(sq2.north west)+(-\ArrDist,0)$) -- node[lbltxt,pos=.5,left] {\rotatebox{90}{\relsize{+2}$\pi(N''_{n-i+1})$}} ($(sq2.south west)+(-\ArrDist,0)$);
  \draw[lbl] ($(sq3.north west)+(0,3*\ArrDist)$) -- node[lbltxt,pos=.5,above] {\relsize{+2}$\pi(N'_{i})$} ($(sq3.north east)+(0,3*\ArrDist)$);
  \draw[lbl] ($(sq3.north west)+(-\ArrDist,0)$) -- node[lbltxt,pos=.5,left] {\rotatebox{90}{\relsize{+2}$\pi(N''_{n-i})$}} ($(sq3.south west)+(-\ArrDist,0)$);
  \draw[lblx] ($(s-3-6)+(0,-\ArrDist)$) -- node[lbltxt,pos=1,below] {\relsize{+2}$L'$} ($(s-3-6)+(0,-4*\ArrDist)$);
  \draw[lblx] ($(s-4-6)+(0,-\ArrDist)$) -- node[lbltxt,pos=1,below] {\relsize{+2}$R'$} ($(s-4-6)+(0,-4*\ArrDist)$);
  \draw[lblx] ($(s-6-3)+(\ArrDist,0)$) -- node[lbltxt,pos=1,right] {\relsize{+2}$L''$} ($(s-6-3)+(4*\ArrDist,0)$);
  \draw[lblx] ($(s-6-2)+(\ArrDist,0)$) -- node[lbltxt,pos=1,right] {\relsize{+2}$R''$} ($(s-6-2)+(4*\ArrDist,0)$);
  \draw[lblb] ($(s-6-2)+(2.5*\ArrDist,0)$) -- node[lbltxt,right] {\relsize{+1}$\pi(w'')$} ($(s-6-3)+(2.5*\ArrDist,0)$);
  \draw[lblb] ($(s-4-6)+(0,-2.5*\ArrDist)$) -- node[lbltxt,below] {\relsize{+1}$\pi(w')$} ($(s-3-6)+(0,-2.5*\ArrDist)$);

\end{tikzpicture}}}\\}

\caption{Water filling illustration of the ranking/unranking principle for the $\cdot \OrderRel_r \cdot$ order in product type non-terminals. Each word $w'.w'' \in \Lang{\Nt'_i.\Nt''_{n-i}}$
is uniquely associated with a rectangular compartment of total area $\Pond(w')\cdot\Pond(w'')$.
The ranking of a word $w = w'.w''$ can be adequately compared to the interval on the volume of water (in blue), which upon injection in the matrix, partly fills the compartment associated
with $w$, assuming a water flow in a left-to-right/top-to-bottom lexicographic order. The unranking stage simply consists in searching for the compartment which is partly filled upon injection of a given volume $\rank$.\label{fig:unrankproduct}}
\end{figure}

\begin{algorithm}[t]
\caption{Unranking algorithm. Returns a word $w$ and an interval [$I_L$, $I_R$[}
\label{alg:unrank}
{\bf Unrank}$(\Nt_m, \rank)$:
\begin{algorithmic}[5]
\IF[{\bf Union} type] {$\Nt \to \Nt'\;|\;\Nt''$}
  \IF {$\rank< \Pond(\Nt'_m)$}
  	\RETURN {\bf Unrank}$(\Nt'_m, \rank)$
  \ELSE
  	\STATE $(w'', [I_L, I_R[)=${\bf Unrank}$(\Nt''_m, \rank-\Pond(\Nt'_m))$
  	\RETURN $(w'', [I_L+\Pond(\Nt'_m), I_R+\Pond(\Nt'_m)[)$	
  \ENDIF
\ELSIF[{\bf Product} type] {$\Nt \to \Nt'.\Nt''$}
   \STATE$L \leftarrow 0$
    \FORALL{$i\in [1,m-1]$}\label{alg:unranklinebous}
      \IF{$\Pond(\Nt'_i)\cdot \Pond(\Nt''_{m-i}) \le \rank$}
        \STATE $\rank \leftarrow \rank - \Pond(\Nt'_i)\cdot \Pond(\Nt''_{m-i})$
        \STATE $L \leftarrow L + \Pond(\Nt'_i)\cdot \Pond(\Nt''_{m-i})$
      \ELSE[Found the right decomposition]
       \STATE $(w', [L', R'[) = \text{\bf Unrank}(\Nt'_i, \frac{\rank}{\Pond(\Nt''_{m-i})})$
       \STATE $(w'', [L'', R'[) = \text{\bf Unrank}\left(\Nt''_{m-i}, \frac{\rank - L_{\Nt'}\cdot\Pond(\Nt''_{m-i})}{\Pond(w')}\right)$
       \STATE $I_L = L+L'\cdot\Pond(\Nt''_{n-i})+L''\cdot \Pond(w')$
       \STATE $I_R = I_L+\Pond(w')\cdot\Pond(w'')$
       \RETURN $(w'.w'', [I_L, I_R[)$
      \ENDIF
   \ENDFOR
\ELSIF[{\bf Terminal} type] {$\Nt \to \T$}
    \RETURN $(\T,[0,\Pond(\T)])$
\ENDIF
\end{algorithmic}
\end{algorithm}
\subsubsection{Ranking algorithm}\label{sec:ranking}
Let $x\in\mathbb{R}^+$ be a positive real number, and $I=[L,R[\subset \mathbb{R}$ an interval, let us overload the sum
operator through $I+x := [L+x,R+x[$ for the sake of simplicity.
Then an algorithm ${\bf Rank}$ for computing the ranking interval of any word $w\in\Lang{\Nt}_n$ can be outlined as:
\begin{itemize}
 \item {\bf Union type} $\Nt \to \Nt'\;|\;\Nt''$: if $w\in\Nt'_n$ then return ${\bf Rank} (w,\Nt'_n)$.\\
 Otherwise $w\in\Nt''_n$, and return $\Pond(\Nt'_n) + {\bf Rank} (w,\Nt''_n)$.
 \item {\bf Product type} $\Nt \to \Nt'.\Nt''$: Since the grammar is unambiguous, then there only exists one
 decomposition $w = w'.w''$ such that $w'\in\Lang{\Nt'}$ and  $w''\in\Lang{\Nt''}$. Let us then define
 $$ \left[L',R'\right[ \Eqdef {\bf Rank} \left(w',\Nt'_{|w'|}\right) \quad\text{and}\quad  \left[L'',R''\right[ \Eqdef {\bf Rank} \left(w'',\Nt''_{|w''|}\right)$$
 As illustrated by Figure~\ref{fig:unrankproduct}, the returned interval must then be 
$$[L,R[ := \left[ \sum_{i=1}^{|w'|-1} \Pond(\Nt'_i.\Nt'_{n-i}) + L'\cdot\Pond(\Nt''_{n-i})+L''\cdot \Pond(w'), L+\Pond(w')\cdot\Pond(w'') \right].$$
 \item {\bf Terminal type} $\Nt \to\T$: Return $[0,\Pond(\T)[$.
\end{itemize}

\subsubsection{Unranking algorithm}\label{sec:unranking}
Let us now turn to Algorithm~\ref{alg:unrank}, which implements unranking for the relation $\cdot \OrderRel_r \cdot$ and mostly consists in inverting the calculation presented in the Section~\ref{sec:ranking}.

\begin{prop}
  Given a real number $\rank\in [0,\Pond(\Lang{\Gram}_n)[$ Algorithm~\ref{alg:unrank} produces
  the word associated with an interval $I$, $\rank\in I$, in $O(n\log(n))$ arithmetic operations
  after a precomputation in $\Theta(|\NtSet|\cdot n)$ arithmetic operations involving storage of $\Theta(|\NtSet| \cdot n)$ numbers.
\end{prop}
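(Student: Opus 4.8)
The plan is to prove the statement in two stages: first the correctness of Algorithm~\ref{alg:unrank} by structural induction against the ranking procedure of Section~\ref{sec:ranking}, then the complexity bound. For correctness I would show, by induction on the length $m\le n$ (equivalently on the structure of the parse), that for every non-terminal $\Nt$ and every $\rank\in[0,\Pond(\Nt_m)[$ the call ${\bf Unrank}(\Nt_m,\rank)$ returns the unique pair $(w,[L,R[)$ such that $[L,R[={\bf Rank}(w,\Nt_m)$ and $\rank\in[L,R[$, where $w$ is the word of $\Lang{\Nt}_m$ selected by the order $\OrderRel_{\Nt}$. The base case is the terminal rule $\Nt\to\T$: then $\Lang{\Nt}_m=\{\T\}$, the returned interval $[0,\Pond(\T)[$ contains every admissible $\rank$, and the claim holds trivially. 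For a union rule $\Nt\to\Nt'\mid\Nt''$, the order $\OrderRel_{\Nt}$ places all of $\Lang{\Nt'}_m$ before all of $\Lang{\Nt''}_m$, so $[0,\Pond(\Nt_m)[$ splits into $[0,\Pond(\Nt'_m)[$ and $[\Pond(\Nt'_m),\Pond(\Nt_m)[$; the test $\rank<\Pond(\Nt'_m)$ routes the recursion to the correct subcall, and the shift by $\Pond(\Nt'_m)$ applied to the interval returned from $\Nt''_m$ is exactly the offset prescribed by ${\bf Rank}$, so the induction hypothesis closes this case.

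The product rule $\Nt\to\Nt'.\Nt''$ is the delicate step and the main obstacle, so I would treat it via the water-filling picture of Figure~\ref{fig:unrankproduct}. For a fixed split $i$, the words $w'.w''$ with $|w'|=i$ form a block of total weight $\Pond(\Nt'_i)\cdot\Pond(\Nt''_{m-i})$, inside which each pair $(w',w'')$ owns a compartment of area $\Pond(w')\cdot\Pond(w'')$, swept column-by-column in the order $\OrderRel_{\Nt'}$ and then top-to-bottom in $\OrderRel_{\Nt''}$. The outer loop subtracts the block weights $\Pond(\Nt'_i)\cdot\Pond(\Nt''_{m-i})$ until $\rank$ falls into the block of the correct split, reproducing the first summand of the ${\bf Rank}$ formula. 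The key identities to verify are that the columns strictly preceding $w'$ contribute precisely $L'\cdot\Pond(\Nt''_{m-i})$, so that $L'\le \rank/\Pond(\Nt''_{m-i})<R'$ holds iff $[L',R'[$ is the $\Nt'_i$-interval of $w'$ (justifying the call ${\bf Unrank}(\Nt'_i,\rank/\Pond(\Nt''_{m-i}))$), and that the residual offset $\rank-L'\cdot\Pond(\Nt''_{m-i})$ ranges over the column of $w'$, so dividing by its width $\Pond(w')$ reduces the problem to unranking $w''$ in $\Nt''_{m-i}$. Reassembling $I_L=L+L'\cdot\Pond(\Nt''_{m-i})+L''\cdot\Pond(w')$ and $I_R=I_L+\Pond(w')\cdot\Pond(w'')$ then recovers the interval of ${\bf Rank}(w'.w'',\Nt_m)$, which contains $\rank$ by construction; this completes the induction.

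For the complexity I would first invoke Section~\ref{sec:weights}: since each sequence $(\Pond(\Nt_m))_{m\le n}$ is algebraic, hence holonomic, its first $n$ terms obey a fixed-order linear recurrence with polynomial coefficients and are computed in $\Theta(n)$ arithmetic operations, giving $\Theta(|\NtSet|\cdot n)$ operations and stored numbers over all non-terminals. A single ${\bf Unrank}$ call then descends one root-to-leaf parse of the produced word, which, exactly as argued in Section~\ref{subsec:forbidden tree}, comprises $\Theta(n)$ derivations ($n$ terminal rules, $n-1$ product rules, and $\mathcal{O}(n)$ union rules, since unambiguity bounds any union chain by $|\NtSet|+1$). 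Union and terminal nodes cost $\mathcal{O}(1)$ each, contributing $\mathcal{O}(n)$ in total. At a product node producing a length-$m$ subword with chosen split $i$, the search for $i$ in the Boustrophedon order $(1,m-1,2,m-2,\ldots)$ of line~\ref{alg:unranklinebous} uses $\mathcal{O}(\min(i,m-i))$ iterations; summing these over the product nodes of the parse obeys the same recurrence analysed by Flajolet \emph{et al}~\cite{flajoletcalculus}, whose solution is $\mathcal{O}(n\log n)$. Hence the unranking runs in $\mathcal{O}(n\log n)$ arithmetic operations after the stated precomputation, which is the claim.
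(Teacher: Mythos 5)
Your proposal is correct and follows essentially the same route as the paper's own proof: an induction on the grammar structure that checks the unranking inverts the ranking of Section~\ref{sec:ranking} (with the same key identities for the product case, namely that $\rank/\Pond(\Nt''_{m-i})$ lands in the $\Nt'_i$-interval of $w'$ and the rescaled residual lands in the $\Nt''_{m-i}$-interval of $w''$), combined with holonomic precomputation of the $\Pond(\Nt_m)$ and the Boustrophedon order for the $\mathcal{O}(n\log n)$ bound. The only cosmetic difference is that you present the product case via the water-filling picture and make the parse-walk length count explicit, whereas the paper leaves the latter implicit and notes, as you should too, that adopting the Boustrophedon order at line~\ref{alg:unranklinebous} requires redefining the total order on words accordingly.
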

\begin{proof}[Sketch of proof]
First let us outline a proof of correctness by induction for the unranking procedure, starting from
the initial case of terminal rules, where the algorithm returns the only word $\T$, associated with an interval $[0,\Pond(\T)[$.

In the case of union rules, one either need to remove the added contribution $\Pond(\Nt'_m)$ when $\rank\ge\Pond(\Nt'_m)$ before proceeding to unrank within $\Lang{\Nt''_m}$, or directly unrank within $\Lang{\Nt'_m}$ otherwise.

For products rules, one first remarks that $\sum_{i=1}^{|w'|-1} \Pond(\Nt'_i.\Nt'_{n-i})$ is exactly the quantity computed within $L$ in section \ref{sec:ranking}, so one is left to ensure that the remaining part of $\rank$ indeed generates its corresponding word. Namely, let us assume that $w = w'.w'' \in \Lang{\Nt'_i.\Nt''_{n-i}}$,
where $w'$ and $w''$ are associated with intervals $[L',L'+\Pond(w')[$ in $\Lang{\Nt'_i}$ and $[L'',L''+\Pond(w'')[$ in $\Lang{\Nt''_{n-i}}$ respectively.
Therefore the interval associated with $w$ (after subtraction of $L$) is $I=[x,x+\Pond(w')\cdot\Pond(w'')[$ with $x\Eqdef L'\cdot\Pond(\Nt''_{n-i})+L''\cdot \Pond(w')$.
Therefore computing, as done by Algorithm~\ref{alg:unrank}, the quantity $\rank':= \rank/\Pond(\Nt''_{m-i})$ for any $\rank\in I$ gives
$$ L'+\frac{L''}{\Pond(\Nt''_{n-i})}\cdot\Pond(w') \le \rank'< L'+\frac{(L''+\Pond(w''))}{\Pond(\Nt''_{n-i})}\cdot\Pond(w').$$
Since $L''$ is a partial sum of the weights in $\Lang{\Nt''_{n-i}}$, one has $0\le L''\le \Pond(\Nt''_{n-i})-\Pond(w_{|\Lang{\Nt''_{n-i}}|})$
and both bounds are tight (reached by the first and last words). It follows that
$$ L'\le \rank' < L'+\Pond(w') $$
in which one recognizes the interval associated with $w'$ within $\Lang{\Nt'_i}$. The recursive unranking on $\Lang{\Nt''_i}$ is given as
argument
$\rank'' := \frac{\rank - L'\cdot\Pond(\Nt''_{m-i})}{\Pond(w')}$ which, for $\rank\in I$, gives
$$  L''\le \rank''< L''+\Pond(w'')$$
in which one recognizes the interval associated with $w''$ within $\Lang{\Nt''_i}$.
We conclude on the correctness of the algorithm by reminding that the unambiguity of the grammar prevents multiple parsings (i.e. different intervals)
to contribute to the generation of a given word.

The complexity of the algorithm is established by the following observations:
\begin{itemize}
  \item The numbers $\Pond(N_m)$ involved in the unranking procedure can be precomputed thanks to the existence of linear recurrences for the
  coefficients of holonomic generating functions, as discussed in Section~\ref{sec:weights}. 
  They can then be precomputed in $\Theta(n)$ arithmetic operations, requiring storage for $|\NtSet|\cdot\Theta(n)$ large numbers.
  \item The order of investigation of possible decompositions can be modified in Algorithm~\ref{alg:unrank}, line~\ref{alg:unranklinebous}
  to adopt a Boustrophedon order as discussed in Section~\ref{sec:bous}, decreasing the worst-case complexity of the algorithm from $\BigO{n^2}$ to $\BigO{n\log(n)}$ in the worst-case. 
  The total ordering on words can then be redefined to account for such a change, and the proof of correctness is easily adapted.
\end{itemize}
\end{proof}

\subsection{Random generation of numbers in gapped intervals} \label{sec:modRand}

\begin{figure}
  {\centering \resizebox{\textwidth}{!}{{\begin{tikzpicture}[inner sep=0]
  \definecolor{rougeForb}{HTML}{eb23238f}
  \definecolor{rougeForbP}{HTML}{6d1515ff}
  \newcommand{\MyScale}{12}
  \newcommand{\Height}{1.3}
  \newcommand{\HDist}{8}
  \newcommand{\ShiftB}{0}
  \tikzstyle{okbox}=[draw=blue]
  \tikzstyle{corr}=[dashed]
  \tikzstyle{forbidden}=[draw=rougeForbP,fill=rougeForb,line width=1.2pt]

\foreach \i/\ival in {0/0,1/7,2/8.4,3/11,4/15,5/17,6/23,7/27,8/31}
{
  \node at ($(\MyScale*\ival pt,0)$) (s\i) {};
  \node[below=\MyScale*\Height pt of s\i] (t\i) {};
  \draw (s\i.center) -- (t\i.center);
}

\foreach \i/\ival in {0/0,1/7,2/10,3/12,4/18,5/22}
{
  \node[below=\MyScale*\HDist pt of t\i,xshift=\MyScale*\ShiftB pt] at ($(\MyScale*\ival pt,)$) (u\i) {};
  \node[below=\MyScale*\Height pt of u\i] (v\i) {};
  \draw (u\i.center) -- (v\i.center); 
}

\path[forbidden]   (s1.center) to (t1.center) to[out=-90,in=-90] (t2.center) -- (s2.center) to[out=90,in=90]
(s3.center) -- (t3.center) to[out=-90,in=-90] (t4.center) -- (s4.center) to[out=90,in=90,looseness=1.3]
(s6.center) -- (t6.center) to[out=-90,in=-90] (t7.center) -- (s7.center) to[out=90,in=90,looseness=.6]
(s1.center);

\foreach \i/\ival in {0/0,1/1,1/2,2/3,2/4,3/5,4/6,4/7,5/8}
{
  \draw[corr] (t\ival.center) -- (u\i.center); 
}

\foreach \i in {0,...,7}
{
  \pgfmathtruncatemacro\k{\i+1}
  \draw[fill=white,opacity=.5] (s\i.center) -- (t\i.center) -- (t\k.center) -- (s\k.center)  -- cycle;
  \draw[] (s\i.center) -- (t\i.center) -- (t\k.center) -- (s\k.center)  -- cycle;
}

\foreach \i in {0,...,4}
{
  \pgfmathtruncatemacro\k{\i+1}
  \draw (u\i.center) -- (v\i.center) -- (v\k.center) -- (u\k.center)  -- cycle;
}

\node[above=10pt of s0] {\relsize{2}$0$};
\node[above=10pt of s8] {\relsize{2}$\pi(\mathcal{L})$};
\node[below=10pt of v0] {\relsize{2}$0$};
\node[below=10pt of v5] {\relsize{2}$\pi(\mathcal{L}/\mathcal{F})$};

\foreach \i in {0,...,7}
{
  \pgfmathtruncatemacro\k{\i+1}
  \path[draw=none] (s\i.center) -- node[pos=.5] (tmpr1) {} (t\i.center);
  \path[draw=none] (s\k.center) -- node[pos=.5] (tmpr2) {} (t\k.center);
  
  \path[draw=none] (tmpr1) -- node[pos=.5] {$w_\k$} (tmpr2);
}

\foreach \i/\ival in {0/1,1/3,2/5,3/6,4/8}
{
  \pgfmathtruncatemacro\k{\i+1}
  \path[draw=none] (u\i.center) -- node[pos=.5] (tmpr1) {} (v\i.center);
  \path[draw=none] (u\k.center) -- node[pos=.5] (tmpr2) {} (v\k.center);
  
  \path[draw=none] (tmpr1) -- node[pos=.5] {$w_\ival$} (tmpr2);
}

\node at ($(\MyScale*13 pt,\MyScale*1.5 pt)$) {\relsize{+4}$\mathcal{F}$};

  \path[draw=none] (s5.center) -- node[pos=.5] (tmpr1) {} (t5.center);
  \path[draw=none] (s6.center) -- node[pos=.5] (tmpr2) {} (t6.center);
  \path[draw=none] (tmpr1) -- node[pos=.2] (tmpr4) {} (tmpr2);
  \path[draw] ($(tmpr4)+(3pt,3pt)$) -- ($(tmpr4)+(-3pt,-3pt)$);
  \path[draw] ($(tmpr4)+(-3pt,3pt)$) -- ($(tmpr4)+(3pt,-3pt)$);

  \path[draw=none] (u3.center) -- node[pos=.5] (tmpr1) {} (v3.center);
  \path[draw=none] (u4.center) -- node[pos=.5] (tmpr2) {} (v4.center);
  \path[draw=none] (tmpr1) -- node[pos=.2] (tmpr5) {} (tmpr2);
  \path[draw] ($(tmpr5)+(3pt,3pt)$) -- ($(tmpr5)+(-3pt,-3pt)$);
  \path[draw] ($(tmpr5)+(-3pt,3pt)$) -- ($(tmpr5)+(3pt,-3pt)$);

  \path[draw,-latex,line width=1pt] (tmpr5) -- (tmpr4);
  \path[draw,-latex,thick] ($(tmpr5)+(0,-10pt)$) -- node[pos=0,below, inner sep=2pt] {$r:= Rand([0,\pi(\mathcal{L}/\mathcal{F})[)$} (tmpr5);

  \path[draw,-latex,thick] ($(tmpr4)+(0,10pt)$) -- node[pos=0,above, inner sep=2pt,text centered,text width=8.7em, anchor=238] 
{$r':= r+\delta$\\$\delta:= \pi(w_2)+\pi(w_4)$} (tmpr4);

\end{tikzpicture}   
}} \\}

  \caption{Illustration of the shift function $\Shift{}$. In order to avoid any forbidden words, one needs to \emph{shift rightward} a random number $\rank$ by the total weight of
  forbidden words (red area) that are found \emph{leftward}.}\label{fig:shift}
\end{figure}
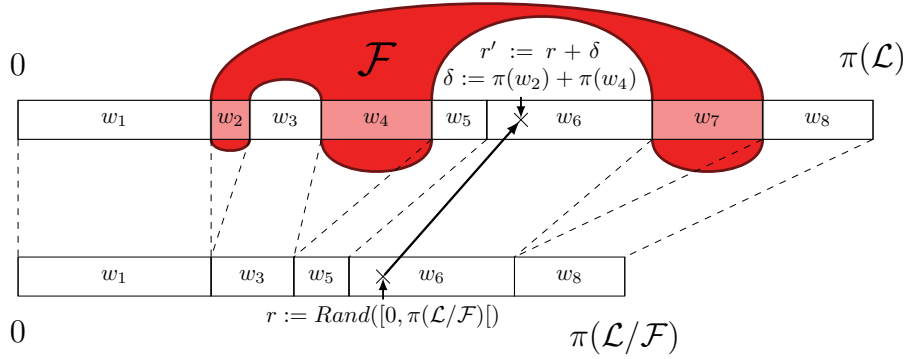
In the previous section, a simple weighted unranking algorithm was proposed.  Therefore by generating a random number $r$ uniformly  in $[0,\Pond(\mathcal{L})[$, and using the {\bf Unranking} algorithm, a word $w$ can be generated with respect to the weighted distribution over a language $\mathcal{L}$. However when a forbidden set $\Forb$ is given, one additionally needs to avoid any interval associated with a forbidden word. In other words, one can no longer draw a random number uniformly in $[0,\Pond(\mathcal{L})[$, but rather in $$I_{\bar\Forb} := \cup_{w\notin\Forb}I_w = [0,\Pond(\mathcal{L})[\backslash (\cup_{{\bar w}\in \Forb} I_{{\bar w}}).$$

Since the intervals $I_w$ are mutually disjoint subsets of $[0,\Pond(\mathcal{L})[$, a possible strategy consists in drawing a random number $\rank\in[0,\Pond(\mathcal{L})-\Pond(\Forb)[$, and increment $\rank$ by some quantity $\Shift{\rank,\Forb}$ that sends $\rank+\Shift{\rank,\Forb}$ into $I_{\bar \Forb}$. Considering Figure~\ref{fig:shift}, one observes that $\Shift{\rank,\Forb}$ can be inductively defined as the total weight
of all forbidden words smaller than the word found at $\rank+\Shift{\rank,\Forb}$. In general, one could order the forbidden words in $\Forb$ and traverse $\Forb$ to compute $\Shift{\rank,\Forb}$, but
this would induce spending an additional $\mathcal{O}(|\Forb|)$ arithmetic operations per generation.
For this reason, the intervals of forbidden words are gathered in a balanced binary tree structure that grants access to $\Shift{\rank,\Forb}$
in $\mathcal{O}(\log(|\Forb|))$ operations.

\subsubsection{AVL tree for forbidden intervals}
\begin{algorithm}[t]
\caption{\modRandom: Takes a uniform random number and a node, and returns a uniform random number that avoids any interval associated with already generated words.}
\label{modRandom}

$\modRandom(\rank, v)$
\begin{algorithmic}[5]

\IF {$v= \varnothing$}
	\RETURN $\rank$
\ENDIF
\STATE $(\LeftTree,\RightTree,{\bar w}, [L_{{\bar w}},R_{{\bar w}}[,\Mass{{\bar w}})\gets v$
\IF {$\rank < L_{{\bar w}}-\Mass{{\bar w}}$}
	\STATE $\modRandom(\rank, \LeftTree)$
\ELSE
	\STATE $\modRandom(\rank+\Mass{{\bar w}_i}+\Pond({\bar w}_i), \RightTree)$
\ENDIF
\end{algorithmic}
\end{algorithm}

For each (word, interval) pair produced by the unrank algorithm, a corresponding node is inserted into an AVL tree~\cite{Adelson-Velskii1962}, i.e. a self-balancing binary search tree, whose height after $k$ insertions 
can be limited to $\Theta(\log(k))$ through balancing operations.
Since the intervals associated with the forbidden set are non overlapping, then they can be compared and therefore
stored within an AVL tree. It follows that the insertion and lookup of $k$ intervals can be performed in $\Theta(k\log(k))$ comparisons
in the worst-case scenario.

Let us then define recursively our tree as either the empty tree, denoted by $\varnothing$, or a 5-tuple $v=(\LeftTree,\RightTree,{\bar w}, I_{{\bar w}},\Mass{{\bar w}})$ where:
\begin{itemize}
  \item $\LeftTree$ and $\RightTree$ are respectively the left and right children of the tree.
  Both can possibly be empty trees.
  \item ${\bar w}$ and $I_{{\bar w}}:=[L_{{\bar w}},R_{{\bar w}}[$ are a forbidden word and its corresponding interval.
  \item $\Mass{{\bar w}}$ is the total weight of forbidden intervals in the left subtree.
\end{itemize}
Let us remind that the nodes of an AVL tree are such that any node in a left subtree is less than or equal to its
root, itself being less than or equal to any node of its right subtree.
Also let us remark that, upon inserting in a tree $v_{\bar w}$
a new word ${\bar w'}\neq{\bar w}$ associated with an interval $I_{\bar w'}=[L_{\bar w'},R_{\bar w'}[)$,
the value $\Mass{{\bar w}}$, initialized at  $0$, can be easily updated into a new value $\Mass{{\bar w}}'$ such that
\begin{equation}
  \Mass{{\bar w}}'  = \left\{ \begin{array}{cl} \Mass{{\bar w}}+ \Pond({\bar w})&
  \text{If }{\bar w}'\OrderRel_r {\bar w} \text{, i.e. }{\bar w}' \text{ is inserted in the left subtree }\LeftTree\text{ of }v \\ \Mass{{\bar w}} &  \text{Otherwise} \end{array} \right.
  \label{eq:update}
\end{equation}

Assuming the tree is correctly built, Algorithm~\ref{modRandom} simply descends into the tree, and computes $\Shift{\rank,\Forb}$
incrementally. For a given node $v=(\LeftTree,\RightTree,{\bar w}, I_{{\bar w}},\Mass{{\bar w}})$, the algorithm determines if $\rank$
corresponds to a word in the interval covered by $\LeftTree$,
by comparing $\rank$ to $L_{\bar w}-\Mass{{\bar w}}$ the total mass of allowed words in $\LeftTree$. If smaller, then $\rank$ remains unmodified
and the algorithm is run recursively on $\LeftTree$. If greater, then the final interval reached by $\rank$ is greater than $I_{\bar w}$, and
fits in the right subtree $\RightTree$. The value $\rank$ is then incremented by the total mass $\Mass{{\bar w}}+\Pond({\bar w})$ of forbidden words smaller than $\RightTree$,
and this value is used within a recursive call on $\RightTree$.
This process is terminated when the empty tree $\varnothing$ is reached, and the current value of $\rank$ is returned.
In other words, the returned value $\rank$ is distant from its original value by the sum of weights $\Mass{{\bar w}}$ on the left subtrees whose
intervals are dominated by $\rank$, in which one recognizes the definition of $\Shift{\rank,\Forb}$.

\subsubsection{Correctness}
\begin{prop} \label{propModRandom}
The function $\modRandom$ computed by Algorithm~\ref{modRandom} is a bijection from $[0, \Pond(\mathcal{L})-\Pond(\Forb))[$
onto $[0,\Pond(\mathcal{L})[\backslash (\cup_{{\bar w}\in \Forb} I_{{\bar w}})$ with uniform density.
  \end{prop}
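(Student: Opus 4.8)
The plan is to exhibit \modRandom{} as the unique order-preserving map that ``closes the gaps'' left by the forbidden intervals, and then to verify that Algorithm~\ref{modRandom} computes exactly this map by structural induction on the AVL tree. First I would pin down the target set: since the intervals $\{I_{\bar w}\}_{\bar w\in\Forb}$ are pairwise disjoint subsets of $[0,\Pond(\mathcal{L})[$, their complement $[0,\Pond(\mathcal{L})[\backslash(\cup_{\bar w\in\Forb}I_{\bar w})=I_{\bar\Forb}$ is a disjoint union of \emph{allowed} intervals of total length $\Pond(\mathcal{L})-\Pond(\Forb)$. There is exactly one order-preserving bijection $g\colon[0,\Pond(\mathcal{L})-\Pond(\Forb)[\to I_{\bar\Forb}$, namely the one sending a compressed coordinate $\rank$ to the unique $t\in I_{\bar\Forb}$ whose allowed-weight-to-its-left equals $\rank$; equivalently $t=\rank+\Shift{\rank,\Forb}$, where $\Shift{\rank,\Forb}$ is the total weight of forbidden words lying entirely to the left of $t$. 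Being a concatenation of translations of the allowed intervals, $g$ is piecewise a translation and hence preserves Lebesgue measure: this is precisely the ``uniform density'' assertion, so the whole proposition reduces to showing $\modRandom=g$.

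The heart of the argument is an induction on the tree establishing $\modRandom=g$. Here I would exploit the invariant recorded by Equation~\eqref{eq:update}: at a node $v=(\LeftTree,\RightTree,\bar w,[L_{\bar w},R_{\bar w}[,\Mass{\bar w})$ the value $\Mass{\bar w}$ is the total weight of forbidden words stored in $\LeftTree$, and, by the binary-search-tree property, $\LeftTree$ holds exactly the forbidden words that precede $\bar w$ for $\OrderRel_r$. At the root this gives the clean reading that $L_{\bar w}-\Mass{\bar w}$ is the \emph{allowed} weight lying before $\bar w$, i.e. the compressed coordinate at which the gap $I_{\bar w}$ opens. The test $\rank<L_{\bar w}-\Mass{\bar w}$ then decides whether the target allowed position precedes $\bar w$ (recurse into $\LeftTree$, leaving $\rank$ untouched) or follows it; in the latter case the increment $\rank\gets\rank+\Mass{\bar w}+\Pond(\bar w)$ skips the weight of all forbidden words preceding $\bar w$ together with $\bar w$ itself, so that the new argument is at least $R_{\bar w}$, exactly where the region after $\bar w$ begins in real coordinates.

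The main obstacle, and the step deserving the most care, is the bookkeeping that lets these \emph{local} invariants compose across the tree: the quantities $L_{\bar u}$ and $\Mass{\bar u}$ stored in a right subtree count forbidden weight only \emph{within} that subtree, yet the recursive call must still route correctly in the global coordinate frame. To handle this I would carry the descent invariant that, whenever $\modRandom(\rho,v)$ is invoked, $\rho$ equals the original compressed coordinate $\rank$ augmented by the total forbidden weight strictly to the left of the order-range spanned by the subtree rooted at $v$. A short computation then shows that, at the root $\bar u$ of such a subtree, the test $\rho<L_{\bar u}-\Mass{\bar u}$ is equivalent to comparing the original $\rank$ against the allowed weight preceding $\bar u$, so the paragraph-three argument applies verbatim at every level; one also checks that both branches preserve the invariant (the left branch leaves the subtree's left boundary unchanged, the right branch advances it past $\bar w$, absorbing $\Mass{\bar w}+\Pond(\bar w)$).

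The base case is the empty tree, where $\modRandom$ returns $\rank$ unchanged and $g$ is the identity on $[0,\Pond(\mathcal{L})[$. Feeding the invariant into the terminal call shows the returned value is $\rank+(\text{forbidden weight to the left of the target})=g(\rank)$, so $\modRandom=g$ on all of $[0,\Pond(\mathcal{L})-\Pond(\Forb)[$. Combining this identity with the measure-preservation of $g$ noted above yields that $\modRandom$ is a uniform-density bijection onto $I_{\bar\Forb}=[0,\Pond(\mathcal{L})[\backslash(\cup_{\bar w\in\Forb}I_{\bar w})$, which is the claim.
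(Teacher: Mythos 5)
Your proof is correct and follows the same basic strategy as the paper's --- a descent down the tree governed by an invariant on the value passed to each recursive call --- but the two arguments diverge in how bijectivity is ultimately obtained. The paper's Lemma~\ref{lem:int} characterizes, for each node $v_i$, the exact \emph{set} $H_i$ of values that can reach it; it then shows that each terminal call $\modRandom(\rank',\varnothing)$ lands in a gap between consecutive forbidden intervals, proves by a binary-search-tree argument that distinct empty-tree positions yield distinct gaps, and closes with a counting argument ($|\Forb|+1$ leaves versus $|\Forb|+1$ gaps) to obtain surjectivity. You instead track only the accumulated shift --- the value passed to a subtree equals the original $\rank$ plus the forbidden weight strictly to the left of that subtree's order-range --- and identify $\modRandom$ pointwise with the canonical gap-closing translation $g$, inheriting bijectivity and measure preservation from $g$ in one stroke. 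Your route dispenses with the separate injectivity and leaf-counting steps, at the cost of having to check carefully (as you do) that the routing test $\rho<L_{{\bar w}}-\Mass{{\bar w}}$ is, modulo the invariant, equivalent to comparing the original $\rank$ with the allowed weight preceding ${\bar w}$. One small imprecision: an order-preserving bijection between two ordered unions of half-open intervals is not unique in general (only the order- \emph{and} measure-preserving one is), but since you define $g$ explicitly and never actually invoke the uniqueness, this does not affect the argument.
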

\begin{proof}

The outline of the proof is as follows: First we establish a technical invariant on the subset of values
passed to Algorithm~\ref{modRandom}. Using this invariant, we show that the final value returned by $\modRandom$
avoids every forbidden interval, and that any interval can be reached.
Let us start with some notations, followed by a technical lemma.

Let $v_i$ be the $i$-th node in the tree and let us denote by $[a,\ldots,i,\ldots,b]$ the indices of nodes accessible from $v_i$.
Then let us denote by $H_{i}$ the interval that is \emph{dominated} by  $v_i$, defined as
$$ H_i = \left[ R_{{\bar w}_{a-1}}, L_{{\bar w}_{b+1}} - \sum_{k= a}^b \Pond({\bar w}_k)\right[$$
where $R_{{\bar w}_{i}}, i\in[1,|\Forb|],$ the upper bound  (resp. $L_{{\bar w}_{i}}, i\in[1,|\Forb|],$ the lower bound) of the forbidden interval of index $i$ is extended
by $R_{{\bar w}_{0}} = 0$ (resp. $L_{{\bar w}_{|\Forb|}}=\Pond(\La)$).

\begin{lem}\label{lem:int}
  Let $v_i=(\LeftTree,\RightTree,{\bar w}, [L_{{\bar w}_i},R_{{\bar w}_i}[,\Mass{{\bar w}})$ be a node in the tree.
  Then the set of values $\rank$ passed as argument to $\modRandom$ jointly with $v_i$ is exactly $H_i$.
\end{lem}
\begin{proof}
Let us prove this claim by induction on the depth $D$ of recursive calls.
Clearly in the initial call ($D=0$), $v_i$ is the root node and $H_i$ is the whole
interval $[0,\Pond(\La)-\Pond(\Forb)[$ from which $\rank$ is drawn uniformly, so our claim holds.
Assume now that the set of possible values for $\rank$ is exactly $H_i := [ R_{{\bar w}_{a-1}}, L_{{\bar w}_{b+1}} - \sum_{k= a}^b \Pond({\bar w}_k)[$ at a given depth $D=M$,
then let us investigate the recursive calls.
Two cases arise, depending on the value of $\rank$:
\begin{itemize}
\item When $\rank\in \mathcal{A}=[R_{{\bar w}_{a-1}},L_{{\bar w}_i}-\Mass{{\bar w}_i}[$, then $\modRandom$ is called on $v_j:=\LeftTree$
  with unmodified value $\rank':=\rank$. Thanks to the binary search tree structure, the indices of the forbidden nodes on the left subtree are $[a,\ldots,i-1]$,
  and $H_j = [ R_{{\bar w}_{a-1}}, L_{{\bar w}_{i}} - \sum_{k=a}^{i-1} \Pond({\bar w}_k)[$.
  Since $\Mass{{\bar w}_i}=\sum_{k=a}^{i-1} \Pond({\bar w}_k)$ (def.), then $H_j = \mathcal{A}$,
  and any value $\rank'\in H_j$ can therefore be passed to the subsequent call.
\item When $\rank\in \mathcal{B}=[L_{{\bar w}_i}-\Mass{{\bar w}_i},L_{{\bar w}_{b+1}} - \sum_{k= a}^b \Pond({\bar w}_k)[$, then $\modRandom$
  is called on $v_j:=\RightTree$ with value $\rank' := \rank+\Mass{{\bar w}_i}+\Pond({\bar w}_i)$.
  The indices of the forbidden nodes on the right subtree are $[i+1,\ldots,b]$, so one has $H_j = [R_{{\bar w}_i}, L_{{\bar w}_{b+1}} - \sum_{k= i+1}^{b} \Pond({\bar w}_k)[$.
  The image $\mathcal{B'}$ of the interval $\mathcal{B}$ through a shift of value $\Mass{{\bar w}_i}+\Pond({\bar w}_i)$ is then
  \begin{align*}
    \mathcal{B'} &= \left[L_{{\bar w}_i}+\Pond({\bar w}_i) ,L_{{\bar w}_{b+1}} - \sum_{k= a}^b \Pond({\bar w}_k) + \sum_{k=a}^{i-1} \Pond({\bar w}_k)+\Pond({\bar w}_i) \right[ \\
    &= \left[R_{{\bar w}_i}, L_{{\bar w}_{b+1}} - \sum_{k= i+1}^{b} \Pond({\bar w}_k)\right[   = H_j.
  \end{align*}
  Finally, since $\rank$ can be any value in $\mathcal{B}$, then any value $\rank'\in H_j$ can be passed to $\modRandom$ for some value $\rank\in\mathcal{B}$.
\end{itemize}
Consequently at depth $D=M+1$, the values $r'$ provided to $\modRandom$ over a subtree $v_j$ are exactly $H_j$, and this property therefore holds for any $D\ge0$.
\end{proof}

Let us show that forbidden intervals are indeed avoided.
Let us consider a node $v_i=(\LeftTree,\RightTree,{\bar w}, [L_{{\bar w}},R_{{\bar w}}[,\Mass{{\bar w}})$, giving rise to a call  $\modRandom(\rank',\varnothing)$,
itself returning the final value.
Since, for this node, Lemma~\ref{lem:int} holds, then the value passed to this call is any $r\in[R_{{\bar w}_{i-1}}, L_{{\bar w}_{i+1}}-\Pond({\bar w}_{i})[$.
 Therefore either $\rank<L_{{\bar w}_i}$ and $\rank\in[R_{{\bar w}_{i-1}},L_{{\bar w}_i}[$ is returned,
or $\rank\ge L_{{\bar w}_i}$ and $\rank+\Pond({\bar w}_{i})\in [R_{{\bar w}_i},L_{{\bar w}_{i+1}}[$ is returned. It follows that any returned value $r'$
falls between two consecutive forbidden intervals (resp. within the ending intervals $[0,L_{{\bar w}_1}[$ or $[R_{{\bar w}_{|\Forb|}},\Pond(\La)[$),
and therefore cannot fall in a forbidden interval.

Furthermore let us prove that any two calls $\modRandom(\rank',\varnothing)$ and $\modRandom(\rank'',\varnothing)$ from $v_i$ and $v_j$ respectively, $i\neq j$,
give rise to distinct intervals.
Recall that, as pointed out in the previous paragraph, the possibly generated intervals from a node $v_i$ are $[R_{{\bar w}_{i-1}},L_{{\bar w}_i}[$ if $\LeftTree=\varnothing$,
and $[R_{{\bar w}_i},L_{{\bar w}_{i+1}}[$ if $\RightTree=\varnothing$.
Therefore, by contradiction, any two calls giving rise to similar intervals would have to involve consecutive nodes $v_i$ and $v_{i+1}$ such that the right subtree $v_i$ is
$\RightTree_i=\varnothing$ and the
left subtree of $v_{i+1}$ is $\LeftTree_{i+1}=\varnothing$. Since such two nodes would represent consecutive values, then one would appears in a subtree of the other,
otherwise the first common ancestor $v_j$ of $v_i$ and $v_{i+1}$ would be such that $v_i<v_j<v_i+1$ and the two nodes would not be consecutive.
Since $v_i<v_{i+1}$, then either $v_i$ would be found in the left subtree
of $v_{i+1}$ (and then $\LeftTree_{i+1}\neq\varnothing$), or $v_{i+1}$ would be found in the right subtree of $v_{i}$ (and then $\RightTree_{i}\neq\varnothing$).
Both situations contradict the premisses, thus any interval $[R_{{\bar w}_{i-1}},L_{{\bar w}_i}[, i\in[1,|\Forb|+1]$ is generated by at most a call over a single empty tree node $\varnothing$.

We conclude with the remark that there are exactly $|\Forb|+1$ leaves in a binary tree with $|\Forb|$ inner nodes. Since there are also $|\Forb|+1$ intervals
$[R_{{\bar w}_{i-1}},L_{{\bar w}_i}[, i\in[1,|\Forb|+1]$ which are generated by at most one leave, then any such interval is generated, and $\modRandom$
is therefore a bijection of $[0,\Pond(\La)-\Pond(\Forb)[$ into $\cup_{i=1}^{|\Forb|+1}[R_{{\bar w}_{i-1}},L_{{\bar w}_i}[ = [0,\Pond(\La)[\backslash(\cup_{i=1}^{|\Forb|} I_{{\bar w}_i})$.

Finally, since the map $\modRandom$ involves only shifts and no scaling, it follows that the map is measure preserving.  Thus the algorithm alters uniformly generated random numbers
over $[0,\Pond(\La)-\Pond(\Forb)[$ into uniform random numbers over $[0,\Pond(\La)[\backslash(\cup_{i=1}^{|\Forb|} I_{{\bar w}_i})$.
\end{proof}

\subsubsection{Complexity considerations}

  As can be seen in Equation~\ref{eq:update}, updating the values $\Mass{v}$ in a tree with $m$ nodes can be done in $\mathcal{O}(\log(m))$ arithmetic operations upon insertion of a new node.
  However the AVL structure also requires a post-processing consisting of $\mathcal{O}(\log(m))$ \emph{shifts} to keep the tree balanced.
  The shift operation involves taking two nodes $v_i<v_j$ that are connected in the tree and switching
  their ancestrality. Namely, if $v_i$ was the first node of the left subtree of $v_j$, then $v_j$ becomes
  become the first node of the right subtree of $v_i$ (and vice-versa). The effect of this operation is
  local, therefore in any pair $(v_i,v_j)$ of nodes inverted by a shift operation, the values $\Mass{v_i}$
  and $\Mass{v_j}$ can be updated in $\mathcal{O}(1)$ arithmetic operations, and  the overall
  cost of $k$ insertions remains in $\mathcal{O}(k\log(k))$ arithmetic operations.

  Each internal node maintains a possibly large number $\Mass{}$, therefore $\Theta(|\Forb|)$ numbers need be stored in the tree.
  The ratio of probability between the most and least probable structure grows like $\Omega(\alpha^n)$, therefore
  at least $\Theta(n)$ bits needs be used for the numbers.

\section{Conclusion and perspectives}\label{sec:conclusion}

	We addressed the random generation of non-redundant sets of sequences from context-free languages,
	while avoiding a predefined set of words. We first investigated the efficiency of a rejection
	approach. Such an approach was found to be acceptable in the uniform case. By contrast, for weighted
	languages, we showed that for some languages the expected number of rejections would grow exponentially
	on the desired number of generated sequences for at least two parameters. Furthermore, we showed that in typical context-free languages and 	for fixed length, the probability distribution can be dominated by a small number of sequences.
	We proposed a first algorithm for this problem, based on the recursive approach. The correctness of the
	algorithm was demonstrated, and its efficient implementation discussed. This algorithm was showed to perform
	a non-redundant generation of $k$ distinct structures in $\mathcal{O}(k\cdot n\log(n))$, after a precomputation in $\Theta(n\cdot |\NtSet|)$ arithmetic operations,
    and requires storage of $\mathcal{O}(n\cdot |\NtSet|+k)$ large numbers, and a data structure consisting of $\Theta(n\cdot k)$ nodes.
  We explored a second approach, based on a ranking/unranking approach for the same task, and obtained an
algorithm in  $\mathcal{O}(n\cdot |\NtSet|+k\cdot n \log n)$ complexity, with the slightly decreased memory consumption of $\mathcal{O}(n\cdot |\NtSet|+k)$ large numbers.
	These complexities hold in the worst-case scenario, and remain mostly unaffected by the magnitude of weights being used.

	\subsection{Different impact of fixed-precision arithmetics implementations} 
When using arbitrary (or sufficient) precision arithmetics, the complexity and storage of the two algorithms are the same.
However, practical implementations may involve using fixed-precision arithmetic, in which case significant differences between the two methods arise.
The complexity of both algorithms can be improved significantly if one uses fixed-precision arithmetic.  However, in both cases, the algorithms suffer from a quantifiable loss of precision.  

If the ratio between the weight of the smallest word and the weight of the language is small, then built-in floating point operations may be used, giving some advantages to the unranking approach with respect to its memory consumption. Indeed, the cost of storing the data structure will then dominate the memory consumption of the recursive version ($\mathcal{O}(n\cdot |\NtSet|+n\cdot k)$), while the memory complexity of the unranking algorithm gently decreases to ($\mathcal{O}(n\cdot |\NtSet|+k)$).

However, we believe the recursive method to be more stable numerically than the unranking approach.
Indeed, the weights accessible on the alternative choice in the usual generation are typically comparable.
Therefore, it will typically take an large number of generations for the recursive algorithm to fully deplete one of the alternatives. By contrast, the unranking algorithm may very quickly isolate a poorly contributing set of words after very few generation. For instance, if the second word in the ordering is generated first, then the data structure may practically forbid the first element, choosing it with $0$ probability because of the rounding error. This point therefore seems favorable to the recursive algorithm.

\subsection{Perspectives}
  Let us briefly outline a few perspectives to the current work:
  \begin{itemize}
	\item {\bf Decomposable structures:} One natural extension of the current work concerns the random generation of the more general class of decomposable
	structures~\cite{flajoletcalculus}. Indeed, such aspects like the \emph{pointing} and \emph{unpointing} operator 	are not explicitly accounted for in the current work. Furthermore, the generation of labeled structures might
	be amenable to similar techniques in order to avoid a redundant generation. It is unclear, however, how
	one may extend the notion of parse tree in this context. Intrinsic ambiguity issues might arise, for instance while using 	the \emph{unranking} operator.

    \item {\bf Non-redundant Boltzmann sampling.} Another direction for an efficient implementation of the non-redundant generation may rely on an extension of Boltzmann
    samplers~\cite{fullboltz}. Indeed, the prefix-tree introduced by the step-by-step
    algorithm could, in principle, be used \emph{as is} to correct the probabilities used by Boltzmann 
  		sampling. However, it is unclear how such a correction may impact the probability of rejection, and 	
		 consequently degrade the performances of the resulting algorithm.

    \item {\bf Accommodating general sets of forbidden words.}
      Both the step-by-step and unranking algorithms require the preliminary insertion of the forbidden set $\Forb$ into a dedicated data structure (prefix tree/AVL tree),
      both requiring the parse trees/walks of any word in $\Forb$ to be available.
      When such an information is not available, one could in principle parse the words in $\Forb$ to
      build the tree. In general this may require $\Forb$ run of a $n^{3-\varepsilon}$ parsing algorithm, leading to an
      impractical $\mathcal{O}(n^{3-\varepsilon}\cdot|\Forb|)$ complexity. In practice, it seems more fruitful to simply
      run the algorithm starting from an empty tree, and to test after each generation
      if the generated word is found in $\Forb$. If so, reject it after adding its parse walk, available to the algorithm
      without further computation since the word was just created, to the tree. Since this update is made at most once for
      each word in $\Forb$, then the worst-case complexity of generating $k$ words remains bounded by $\mathcal{O}(|\Forb|\cdot n\log(n))$ arithmetic operations.

  \end{itemize}
\section*{Acknowledgements}
	 The author would like to thank A. Denise, C. Herrbach, and an anonymous reviewer
  for thought-provoking remarks and helpful suggestions. This work was supported by the French \emph{Agence Nationale de la Recherche} 
as part of the {\sc Magnum} project ({\tt ANR 2010 BLAN 0204}).

\bibliographystyle{amsplain}
\bibliography{biblio}

\newpage
\appendix

\section{Expressivity of the binary Chomsky normal form}

 Let us show that the assumption of a BCNF can be made without loss of generality (or performance).
Indeed, it is a classic result that any context-free grammar $\Gram$ can be transformed into a Chomsky Normal Form (CNF) grammar that generates the same language. 
\subsection{From CNF to BCNF grammars: An algorithm}
From such a grammar, an equivalent grammar in BCNF can be simply and efficiently obtained through the following transformation: 
\begin{enumerate}[i)]
\item For each terminal $\T$ (resp. empty word $\varepsilon$) create a new non-terminal $\Nt_\T$ (resp. $\Nt_\varepsilon$) whose sole production is $\Nt_\T \Production \T$ (resp. $\Nt_\T \Production \varepsilon$); 
\item  Replace any occurrence of $\T$ (resp. $\varepsilon$) within a production rule with its dedicated non-terminal $\Nt_\T$ (resp. $\Nt_\varepsilon$);
\item Replace any rule $\Nt \Production \Nt'.\Nt''$, where $\Nt$ has more than one derivation, with rules
$\Nt \Production \Nt^\bullet$ and $\Nt^\bullet \Production \Nt'.\Nt''$, where $\Nt^\bullet$ is a newly created non-terminal; 
\item For any non-terminal $\Nt$ having multiple production rules $(\Nt\to \UnG{X_1}{\UnG{\cdots}{X_k}}$,\;$ k>1$), create $k-2$ dedicated non-terminals $\{\Nt_i\}_{i=1}^{k-2}$, and replace the rules of $\Nt$ with a tree-like equivalent hierarchy of binary rules. For instance, one may create chained rules, such that $\Nt \Production \UnG{X_1}{\Nt_1}$, $\{\Nt_i \Production \UnG{X_{i+1}}{\Nt_{i+1}}\}_{i=1}^{k-3}$, and $\Nt_{k-2} \Production \UnG{X_{k-1}}{X_{k}}$;
\item Finally, remove every non-terminal whose sole production is $\Nt \Production \Nt'$, replacing any occurrence of $\Nt$ by $\Nt'$ in any derivation rule. 
\end{enumerate}

\subsection{Correctness}
The equivalence of the resulting grammar to the input one in CNF trivially follows from the language-preserving nature of the substitutions performed at each step. Furthermore, it is easily verified that the resulting grammar is in BCNF. Indeed, consider the set $\ProdRules_\Nt$ of derivation rules available for any former non-terminal $\Nt$,  along the transformation: 
\begin{itemize}
\item Before executing the transformation:  $\ProdRules_\Nt$  consists an arbitrary number of terminal rules ($\Nt \to \Nt'$), binary-product ($\Nt \to \PrG{\Nt'}{\Nt''}$) rules, or possibly an epsilon rule for the axiom ($\Axiom \to \varepsilon$);
\item Steps i) and ii) remove terminal symbols:  After their execution, $\ProdRules_\Nt$  contains an arbitrary number of  unary ($\Nt \to \Nt'$) or binary ($\Nt \to \PrG{\Nt'}{\Nt''}$) rules;
\item Step iii) removes non-unary multiple rules: $\ProdRules_\Nt = \{(\Nt \to \PrG{\Nt'}{\Nt''})\}$, 
or $\ProdRules_\Nt = \{\Nt \to \Nt'\;|\;\Nt' \in \NtSet' \subseteq \NtSet\}$; 
\item Step iv) binarizes multiple rules:  $\ProdRules_\Nt = \{(\Nt \to \Nt')\}$, 
 $\ProdRules_\Nt = \{(\Nt \to \PrG{\Nt'}{\Nt''})\}$,
or $\ProdRules_\Nt = \{(\Nt \to \Nt'),(\Nt \to \Nt'')\}$,
where $\Nt',\Nt'' \in \NtSet$;
\item Finally, step v) removes extraneous unary non-terminals:  $\ProdRules_\Nt = \{(\Nt \to \PrG{\Nt'}{\Nt''})\}$,
or $\ProdRules_\Nt = \{(\Nt \to \Nt'),(\Nt \to \Nt'')\}$,
$\Nt',\Nt'' \in \NtSet$.
\end{itemize}
The derivation rules available for the set of non-terminals, created during the transformation, are initially in BCNF. Note that the only modification performed on productions of new non-terminals substitute a non-terminal for another, thereby keeping the rules BCNF-compliant. Finally, the constraint on the initial CNF guarantees that the only epsilon rule is derived from the axiom, either in a single production or through a sequence of non-referential productions created at step iv).
Therefore one concludes that the produced grammar is indeed in BCNF.

The proposed transformation from a CNF to an equivalent BCNF can be implemented in linear time, through a careful ordering of the removals performed by step v), and the number of rules is at most increased by a constant factor.

\end{document}